\documentclass[11pt, reqno]{amsart}
\usepackage{amsfonts}
\usepackage{amssymb,physics}
\usepackage{amsmath}
\usepackage{tikz}
\usetikzlibrary{patterns,snakes,arrows.meta}
\usepackage{algorithm}
\usepackage{algpseudocode} 
\usepackage{scalerel}
\usepackage{subfigure}
\usepackage{graphicx}

\oddsidemargin 6pt
\evensidemargin 6pt
\marginparwidth 48pt
\marginparsep 10pt
\topmargin -18pt
\headheight 12pt
\headsep 25pt
\footskip 30pt
\textheight 625pt
\textwidth 451pt
\columnsep 10pt
\columnseprule 0pt
\allowdisplaybreaks[4]

\def\Ls{\mathcal{L}}
\def\Is{\mathcal{I}}
\def\Us{\mathcal{U}}

\def\Ms{\mathcal{M}}

 \def\Ss{\mathcal{S}}
\def\sgn{\mathrm{sgn}}
\def\sb{\boldsymbol{s}}
\def\dd{\mathrm{d}}
\def\ii{\mathrm{i}}
\def\b1{\boldsymbol{1}}
\def\sf{s_{\mathrm{f}}}
\def\si{s_{\mathrm{i}}}

\def\sT{\hat{T}}
\def\st{\tilde{s}}

\def\ssb{{\hat{\sb}}}
\def\P{\mathbb{P}}
\def\tr{\mathrm{tr}}
\def\ee{\mathrm{e}}
\def\ii{\mathrm{i}}

\newcommand\mQ{\mathcal{Q}}
\newcommand{\mf}[1]{\mathfrak{#1}}
\newcommand{\mc}[1]{\mathcal{#1}}

\newcommand{\redtext}[1]{{\color{red}#1}}

\definecolor{darkgreen}{rgb}{0.0, 0.65, 0.29}

\newtheorem{proposition}{Proposition}

\newtheorem{lemma}{Lemma}
\newtheorem{theorem}{Theorem}
\theoremstyle{remark}
\newtheorem{remark}[theorem]{Remark}

\title[Fast algorithms of bath calculations]{Fast algorithms of bath calculations in simulations of quantum quantum system-bath dynamics}

\author{Zhenning Cai}
\address[ZC]{Department of Mathematics, National University of
  Singapore, Level 4, Block S17, 10 Lower Kent Ridge Road, Singapore 119076.}
\email{matcz@nus.edu.sg}

\author{Jianfeng Lu}
\address[JL]{Department of Mathematics, Department of Physics, and
  Department of Chemistry \\ Duke University, Box 90320, Durham NC 27708, USA.}
\email{jianfeng@math.duke.edu}

\author{Siyao Yang}
\address[SY]{Department of Mathematics, National University of
  Singapore, Level 4, Block S17, 10 Lower Kent Ridge Road, Singapore 119076.}
\email{matsiya@nus.edu.sg}

\thanks{Zhenning Cai's work was supported by the Academic Research Fund of the Ministry of Education of Singapore under grant R-146-000-291-114. The work of JL~was supported in part by the National Science Foundation via grants DMS-2012286 and CHE-2037263.}

\begin{document}

\begin{abstract}
    We present fast algorithms for the summation of Dyson series and the inchworm Monte Carlo method for quantum systems that are coupled with harmonic baths. The algorithms are based on evolving the integro-differential equations where the most expensive part comes from the computation of bath influence functionals. To accelerate the computation, we design fast algorithms based on reusing the bath influence functionals computed in the previous time steps to reduce the number of calculations. It is proven that the proposed fast algorithms reduce the number of such calculations by a factor of $O(N)$, where $N$ is the total number of time steps. Numerical experiments are carried out to show the efficiency of the method and to verify the theoretical results.
\end{abstract}

\date{\today}
\keywords{Dyson series; inchworm Monte Carlo method; integro-differential equation; accelerated bath calculation; fast algorithms}

\maketitle

\section{Introduction}

In classical thermodynamics, many processes are irreversible due to the dissipation of energy. To describe such an effect at the quantum level, quantum dissipation has been widely studied in the literature, and one of the successful approaches is the Caldeira-Leggett model \cite{Caldeira1983a,Caldeira1983b}, which assumes that the quantum system is coupled with a harmonic bath. The presence of the bath leads to non-Markovian and irreversible dynamics of the quantum system. The system-bath dynamics has also been extensively used to study quantum decoherence, which leads to classical behavior of the quantum systems. In addition to its theoretical importance, the model is widely used to describe interaction of a quantum system with its environment, and has applications in a number of fields  including quantum optics \cite{Breuer2007}, quantum computation \cite{Nielsen2010}, and dynamical mean field theory \cite{Gull2011RMP}. 

The main challenge for simulating Caldeira-Leggett type models lies in the huge degrees of freedom associated with the harmonic bath, which makes the direct calculation of the wave function impossible in practice. For decades, many techniques for dimension reduction have been developed in order to avoid solving the harmonic bath directly. Some classical numerical methods based on path integrals, such as the quasi-adiabatic propagator path integral (QuAPI) method \cite{Makri1995, Makri1996}, the iterative QuAPI-based methods \cite{Makarov1994, Makri2017} and the hierarchical equations of motion (HEOM) \cite{Strumpfer2012}, introduce the bath effects using the influence functional \cite{Feynman1963} and can produce numerically exact results, while a considerably large memory cost is often required. A wave function-based approach known as the multiconfiguration time-dependent Hartree (MCTDH) method \cite{Beck2000,Meyer1990}, as well as its multilayer formulation (ML-MCTDH) \cite{Wang2003}, has achieved impressive success in molecular systems, although they may become harder to converge for the nonequilibrium heat transport in the Caldeira-Leggett model \cite{Chen2017b}.

Another conventional approach to the system-bath dynamics is the generalized quantum master equation (GQME) \cite{Zwanzig1960,Nakajima1958,Mori1965} obtained by applying the Nakajima-Zwanzig projection operator, which reduces the dissipative bath term to a memory kernel. Such formulation provides an exact integro-differential equation for simulating the reduced dynamics. However, the evaluation of the memory kernel could be challenging due to its dependence on the projector. To alleviate this difficulty, \cite{Shi2003,Zhang2006} have proposed new approaches to calculating the memory kernel based on its projection-free formulations. The transfer tensor method (TTM) \cite{Cerrillo2014} based on the discretization of GQME is later introduced, which is also applied \cite{Rosenbach2016} in an method called the time evolving
density matrix using orthogonal polynomials algorithm (TEDOPA) \cite{Chin2010,Prior2010} to reduce the size of the propagator. Further development on the evaluation of memory kernel includes \cite{Kidon2018} where the memory kernel is related to the evolution of a reduced system propagator which is numerically computed by ML-MCTDH, and \cite{Kelly2013} which computes the memory kernel based on semiclassical trajectories.

An alternative to these deterministic approaches gaining popularity in the recent years is a class of stochastic methods known as the diagrammatic quantum Monte Carlo (dQMC) \cite{Prokof'ev1998,Werner2009}, which have been shown to be powerful in describing the equilibrium physics of impurity models. The underlying idea is to replace the expensive high-dimensional integrals in the Dyson series of the quantum observable by the average of unbiased samples of diagrammatic expansions \cite{Chen2017,Muhlbacher2008,Werner2006}. For example in GQME, the memory kernel can be evaluated stochastically using the real time path integral Monte Carlo \cite{Cohen2011,Cohen2013}. However, such an approach severely suffers from the notorious numerical sign problem \cite{Chen2017,Cai2020,Cai2020b}, meaning that the variance of the numerical solution grows at least exponentially with time. To maintain the accuracy of the results, a large number of Monte Carlo samples need to be drawn as time increases, leading to an extremely expensive computational cost on the evaluation of the bath influence functional. To mitigate the sign problem, many techniques such as stochastic unraveling of influence functionals \cite{StockBurger2002} and multilevel blocking Monte Carlo \cite{Mak1992,Egger2000,Muhlbacher2003} have emerged throughout the past several decades. Recently, the inchworm Monte Carlo method \cite{Cohen2015,Chen2017} based on the partial resummation of Dyson series has been proposed, which has proven impressive capability to relieve the sign problem both numerically \cite{Chen2017b,Cai2020} and theoretically \cite{Cai2020b}. Nevertheless, the computations of the bath influence functional remain to be the major 
bottleneck \cite{Boag2018,Yang2021} even after such reductions. In this paper, we consider a strategy to further reduce the cost of bath calculations in the summation of Dyson series and inchworm Monte Carlo method.  

The central idea to reduce bath calculations lies in the invariance of the influence functional in Dyson series or inchworm method, which is formulated as a summation over some pairwise bath interactions. In detail, the expected value of an observable $O$ can be written as $\tr \bigl(\rho(0)\ee^{\ii t H} O \ee^{-\ii t H}\bigr)$ with $\rho(0)$ being the initial density matrix and $H$ the quantum Hamiltonian. In the diagrammatic Monte Carlo methods, such an expression is often denoted using the unfolded Keldysh contour \cite{Keldysh1965} plotted in Figure \ref{fig:keldysh}. By Wick's theorem, computing the trace requires us to evaluate the correlation function $B(\tau_1,\tau_2)$ of two time points $-t \le \tau_1 \le \tau_2 \le t$, which can be diagrammatically represented as an arc in Figure \ref{fig:keldysh}. This two-point correlation function satisfies the translational invariance (when $\tau_1$ and $\tau_2$ are on the same side of the origin) and the stretching invariance (when $\tau_1$ and $\tau_2$ are on different sides of the origin). Making use of this property can greatly reduce the computational cost for the bath calculation.

Let us remark that the invariance of the two-point correlation function is also utilized in the recently proposed SMatPI (small matrix decomposition of the path integral) method \cite{makri2020}, which is an improved version of the iterative QuAPI method. In SMatPI, the bath integrand factor is computed using the Feynman-Vernon influence functional. The SMatPI method groups a number of paths into some small matrices, and using the translational invariance of the Feymann-Vernon influence functional, the information in the small matrices can be directly used in future time steps without being recalculated. In our method, the bath influence functional is the sum of a lot of diagrams, and the reuse of previously calculated functionals avoids recomputation of all the translated or stretched diagrams included, which significantly enhances the computational efficiency.

\begin{figure}[h]

\begin{tikzpicture}

\draw[thick] (-7,0) -- (-1,0);
\draw[thick] (1,0) -- (7,0);
\draw[thick] (-7,0.1)--(-7,-0.1);
\draw[thick] (-1,0.1)--(-1,-0.1);
\draw[thick] (7,0.1)--(7,-0.1);
\draw[thick] (1,0.1)--(1,-0.1);
\node[below left] at (-7,0) { $-t$};
\node[below right] at (-1,0) { $t$};
\node[below left] at (1,0) { $-t$};
\node[below right] at (7,0) { $t$};

\draw plot[only marks,mark =x, mark options={color=black, scale=1.5}]coordinates {(-4,0) (4,0)};

\draw plot[only marks,mark =*, mark options={color=black, scale=0.5}]coordinates { (3.5,0) (2.5,0) (5.5,0) (6.5,0) };

\draw plot[only marks,mark =*, mark options={color=black, scale=0.5}]coordinates { (-3.5,0) (-2.5,0) (-3,0) (-2,0) };

\draw (3.5,0) to[bend left=75] (5.5,0);
\draw (2.5,0) to[bend left=75] (6.5,0);

\draw (-3.5,0) to[bend left=75] (-2.5,0);
\draw (-3,0) to[bend left=75] (-2,0);

\node[below] at (-4,-0.5) {(a)};
\node[below] at (4,-0.5) {(b)};

\node[below] at (-4,0) {$0$};
\node[below] at (4,0) {$0$};

\node[below] at (-3.5,0) {$\tau_1$};
\node[below] at (-2.5,0) {$\tau_2$};
\node[below] at (-3,0) {$\tau'_1$};
\node[below] at (-2,0) {$\tau'_2$};

\node[below] at (3.5,0) {$\tau_1$};
\node[below] at (5.5,0) {$\tau_2$};
\node[below] at (2.5,0) {$\tau'_1$};
\node[below] at (6.5,0) {$\tau'_2$};

\end{tikzpicture}

\caption{Two cases of bath correlation invariance $B(\tau_1,\tau_2) = B(\tau'_1,\tau'_2)$.}

\label{fig:keldysh}
\end{figure}


The rest of this paper is organized as follows. In Section \ref{sec:dyson reuse}, we introduce the spin-boson model and its Dyson series expansion. An integro-differential equation associated with Dyson series is then derived, based on which we propose a fast algorithm where the previous bath calculations are reused. An analysis on computational cost is included to examine the performance of the proposed algorithm. Such framework is then applied to Section \ref{sec:inchworm reuse} where the more complicated inchworm Monte Carlo method is studied. Some numerical experiments are carried out in Section \ref{sec:num exp} to verify the theoretical results in Section \ref{sec:dyson reuse} and \ref{sec:inchworm reuse}, and test the order of convergence of the fast algorithms. Finally, some conclusions and discussions are given in Section \ref{sec:conclusion}.

\section{Fast calculation of time evolution of Dyson series}
  \label{sec:dyson reuse}
\subsection{Introduction to spin-boson model and Dyson series}
\label{sec:dyson}
We study the system-bath dynamics described by the von Neumann equation for the density matrix $\rho(t)$   
\begin{equation} \label{eq:vonNeumann}
\ii \frac{\dd \rho}{\dd t} = [H, \rho]:= H\rho - \rho H,
\end{equation}
where the Schr\"odinger picture Hamiltonian $H$ is a Hermitian operator on the Hilbert space $\mc{H} = \mc{H}_s \otimes \mc{H}_b$, with $\mc{H}_s$ and $\mc{H}_b$ representing respectively the Hilbert spaces associated with the system and the bath. The Hamiltonian $H$ consists of the Hamiltonians of the system and the bath, as well as a coupling term describing the interaction of the system and the bath. Assuming that the coupling term has the tensor-product form, we have
\begin{displaymath}
 H =  H_s \otimes \mathrm{Id}_b + \mathrm{Id}_s \otimes H_b +  W_s \otimes W_b,
\end{displaymath}
where $H_s,W_s \in \mc{H}_s$, $H_b,W_b \in \mc{H}_b$, and $\mathrm{Id}_s,\mathrm{Id}_b$ are the identity operators for the system and the bath, respectively. In our paper, we take the common assumption that the bath is modeled by a larger number of harmonic oscillators. While the algorithms discussed in this work can be easily generalized to any multiple-state open quantum systems, we only consider the simplest system modeled by a single spin. Such a problem contains most difficulties in the treatment of the system-bath coupling, which is known as the spin-boson model to be introduced below.

\subsubsection{Spin-boson model}
As one fundamental example of the system-bath dynamics \cite{Wang2000, Kernan2002, Duan2017}, the spin-boson model assumes that
\begin{displaymath}
\mc{H}_s = \text{span}\{ \ket{0}, \ket{1} \}, \qquad
\mc{H}_b = \bigotimes_{l=1}^L \left( L^2(\mathbb{R}^3) \right),
\end{displaymath}
where $L$ is the number of harmonic oscillators in the bath. The corresponding Hamiltonians are
\begin{displaymath}
H_s = \epsilon \hat{\sigma}_z + \Delta \hat{\sigma}_x, \qquad
H_b = \sum_{l=1}^L \frac{1}{2} (\hat{p}_l^2 + \omega_l^2 \hat{q}_l^2)
\end{displaymath}
Here $\hat{\sigma}_x$, $\hat{\sigma}_z$ are Pauli matrices satisfying
  $\hat{\sigma}_x \ket{0} = \ket{1}$, $\hat{\sigma}_x \ket{1} = \ket{0}$,
  $\hat{\sigma}_z \ket{0} = \ket{0}$, $\hat{\sigma}_z \ket{1} = -\ket{1}$, and the parameters $\epsilon$, $\Delta$ are respectively the energy difference between two spin states and the frequency of the spin flipping. In the bath Hamiltonian $H_b$, the notations $\hat{p}_l$, $\hat{q}_l$ and $\omega_l$ are respectively the momentum operator, the position operator and the frequency of the $l$th harmonic oscillator. The coupling operators are given by  
\begin{displaymath}
 W_s = \hat{\sigma}_z ,\qquad W_b = \sum_{l=1}^L c_l \hat{q}_l,
\end{displaymath}
where $c_l$ is the coupling intensity between the $l$th harmonic oscillator and
the spin.

The density matrix solving \eqref{eq:vonNeumann} can be written as $\rho(t) =  \ee^{-\ii t H} \rho(0) \ee^{\ii t H}$, and we assume its initial value has the separable
form $\rho(0) = \rho_s \otimes \rho_b$ with the initial bath $\rho_b$ being the thermal equilibrium $\exp(-\beta H_b)$, where $\beta$ is the inverse temperature. We are interested in the evolution of the expectation for a given observable $O = O_s \otimes \mathrm{Id}_b$ acting only on the system, defined by
\begin{equation} \label{eq:O(t)}
\langle O(t) \rangle := \tr(O \rho(t))
  = \tr(O \ee^{-\ii t H} \rho(0) \ee^{\ii t H})=  \tr(\rho_s \otimes \rho_b \ee^{\ii t H} O_s \ee^{-\ii t H} ) = \tr_s (\rho_s G(-t,t))
\end{equation}
where the propagator $G(-t,t):=\tr_b(\rho_b \ee^{\ii t H} O_s \ee^{-\ii t H} )$ is a $2\times 2$ Hermitian matrix due to the cyclic property of the trace operator:
\begin{equation}\label{G hermitian}
 G(-t,t)^\dagger = \tr_b(\ee^{\ii t H^\dagger} O_s^\dagger \ee^{-\ii t H^\dagger} \rho_b^\dagger )  =  \tr_b(\ee^{\ii t H} O_s \ee^{-\ii t H} \rho_b ) = \tr_b(\rho_b \ee^{\ii t H} O_s \ee^{-\ii t H}  ) = G(-t,t).
\end{equation}

\subsubsection{Dyson series}
Due to the high dimensionality of the space $\mc{H}_b$, it is impractical to solve $\ee^{\pm \ii t H}$ directly. One feasible approach is to apply the method of quantum Monte Carlo to approximate $G(-t,t)$ numerically. It is well known that $G(-t,t)$ can be expanded into the following \emph{Dyson series} (for derivation, see \cite{Cai2020}):
\begin{equation}\label{G -t t}
    G(-t, t)  =  \ee^{\ii t H_s} O_s \ee^{- \ii  t H_s} + \sum_{m=1 }^{+\infty}
  \ii^m \int_{-t \le \sb \le t} \dd \sb 
(-1)^{\#\{\sb < 0\}}  \mathcal{U}^{(0)}(-t, \sb , t) \cdot
    \mathcal{L}_b(\sb) , \text{~for~} t \geq 0.
 \end{equation}
The above formula is interpreted as:
\begin{itemize}
\item Integral notation: for any $a\le A$
\begin{displaymath}
\int_{a \le \sb \le A} \dd \sb  : = \int^{A}_{a} \dd s_m \int^{s_m}_{a} \dd s_{m-1} \cdots \int^{s_{2}}_{a} \dd s_1.
\end{displaymath}
\item $\#\{\sb < 0\}$: number of components in $\sb=(s_1,s_2,\cdots,s_m)$  that are less than $0$.
\item System associated functional $\mathcal{U}^{(0)}$:
\begin{equation}\label{def U0}
 \mathcal{U}^{(0)}(-t,\sb,t)  = G_s^{(0)}(s_m, t) W_s G_s^{(0)}(s_{m-1}, s_{m}) W_s
  \cdots W_s G_s^{(0)}(s_1, s_2) W_s G_s^{(0)}(-t, s_1),
\end{equation}
where
\begin{equation}\label{def:Gs}
   G_s^{(0)}(\si, \sf) =
  \begin{cases}
    \ee^{-\ii (\sf - \si) H_s},
    & \text{if } \si \le \sf < 0, \\[5pt]
    \ee^{-\ii (\si - \sf) H_s},
    & \text{if } 0 \le \si \le \sf, \\[5pt]
    \ee^{\ii  \sf H_s} O_s \ee^{\ii  \si H_s},
    & \text{if } \si < 0 \le \sf.
  \end{cases}
\end{equation}

\item Bath influence functional $\Ls_b$:
\begin{equation} \label{eq:L all pair}
   \Ls_b(s_1,\cdots,s_m) =  \left\{   \begin{array}{l l}
   0, & \text{if $m$ is odd}; \\ 
   \displaystyle \sum_{\mf{q} \in \mQ(\sb)} \prod_{(s_j,s_k) \in \mf{q}} B(s_j,s_k), &  \text{if $m$ is even},
    \end{array} \right.
\end{equation}
where $B: \{(\tau_1,\tau_2) \mid  \tau_1 \leq \tau_2 \} \rightarrow \mathbb{C}$ is the \emph{two-point bath correlation} whose value only relies on the difference of the absolute values of the two variables:
\begin{equation}\label{def:B}
B(\tau_1, \tau_2) = B^*(\Delta \tau) = \frac{1}{\pi} \int^{\infty}_0 J(\omega) \left[  \coth\left(\frac{\beta \omega }{2}\right) \cos(\omega \Delta \tau) - \ii \sin(\omega \Delta \tau)  \right] \dd \omega
\end{equation}
with 
\begin{equation*}
 \Delta \tau = |\tau_1| - |\tau_2|.
\end{equation*}
The explicit formula of the single-variable function $B^*(\cdot)$ depends on the real-valued spectral density $J(\omega)$. The set $\mQ(\sb)$ is given by:
\begin{equation} \label{eq:all linking pairs}
  \begin{split}
  & \mQ(s_1,\cdots,s_m) =\\
  &\Big\{ \{(s_{j_1}, s_{k_1}), \cdots, (s_{j_{m/2}}, s_{k_{m/2}})\} \,\Big\vert\,  \{j_1, \cdots, j_{m/2}, k_1, \cdots, k_{m/2}\} = \{1,\cdots,m\}, \\
  & \hspace{120pt} j_l < k_l \text{ for any } l = 1,\cdots,m/2
  \Big\}.
  \end{split}
\end{equation}  
 
\end{itemize} 

To get some intuition behind the definition of the bath influence functional, we consider a simple case $m=4$, where the equation \eqref{eq:L all pair} turns out to be
\begin{equation} \label{eq:all pairs example}
   \Ls_b(s_1,s_2,s_3,s_4) = B(s_1,s_2) B(s_3,s_4) + B(s_1,s_3) B(s_2,s_4) + B(s_1,s_4) B(s_2,s_3),
\end{equation}
which can be graphically represented by the following diagrams:
\begin{equation} \label{eq:all linking pair diagram example}
\Ls_b(s_1,s_2,s_3,s_4)
=
\begin{tikzpicture}
\draw[-] (0,0)--(1.5,0);\draw plot[only marks,mark =*, mark options={color=black, scale=0.5}]coordinates {(0,0) (0.5,0) (1,0)(1.5,0)};
\draw[-] (0,0) to[bend left=75] (0.5,0);
\draw[-] (1,0) to[bend left=75] (1.5,0);
 \end{tikzpicture}
 +
 \begin{tikzpicture}
\draw[-] (0,0)--(1.5,0);\draw plot[only marks,mark =*, mark options={color=black, scale=0.5}]coordinates {(0,0) (0.5,0) (1,0)(1.5,0)};
\draw[-] (0,0) to[bend left=75] (1,0);
\draw[-] (0.5,0) to[bend left=75] (1.5,0);
 \end{tikzpicture}
 +
  \begin{tikzpicture}
\draw[-] (0,0)--(1.5,0);\draw plot[only marks,mark =*, mark options={color=black, scale=0.5}]coordinates {(0,0) (0.5,0) (1,0)(1.5,0)};
\draw[-] (0,0) to[bend left=60] (1.5,0);
\draw[-] (0.5,0) to[bend left=75] (1,0);
 \end{tikzpicture}~.
\end{equation}
In the diagrammatic representation above, each diagram refers to a product $B(\cdot,\cdot)B(\cdot,\cdot)$ where each arc connecting a pair of bullets denotes the corresponding two-point correlation. For general $m$, the value of the corresponding bath influence functional is the sum of all possible combinations of such pairings, and the number of these diagrams is $(m-1)!!$. Since the bath influence functional vanishes when $m$ is odd, the right-hand side of \eqref{G -t t} actually only sums over terms with even $m$.

To evaluate $G(-t,t)$, one may truncate the Dyson series at a sufficiently large even integer $\bar{M}$ and evaluate those high-dimensional integrals on the right-hand side using Monte Carlo integration, resulting in the bare dQMC. More specifically, one can draw $\Ms$ samples of $\{m^{(i)}, \sb^{(i)}\}$ independently according to a certain distribution $\P(m,\sb)$ for $m=2,\cdots,\bar{M}$ and $\sb$ satisfying $-t\le s_1\le \cdots \le s_m \le t$. Then $G(-t, t)$ can be approximated by  
\begin{equation}\label{bare dqmc}
  G(-t, t)  \approx  \ee^{\ii t H_s} O_s \ee^{- \ii  t H_s} + \frac{1}{\Ms} \  \sum^{\Ms}_{i = 1} \  \frac{1}{\P(m^{(i)},\sb^{(i)})} \cdot    \ii^{m^{(i)}} 
(-1)^{\#\{\sb^{(i)} < 0\}}  \mathcal{U}^{(0)}(-t, \sb^{(i)} , t) 
    \mathcal{L}_b(\sb^{(i)}). 
\end{equation}
The numerical solution obtained via bare dQMC has been proved to have a variance that grows double exponentially with respect to $t$ \cite{Cai2020b}. Therefore, the number of samples $\Ms$ should increase with $t$ accordingly to achieve sufficient accuracy at the final time. Hence, the computational cost of the Monte Carlo approximation, especially the expensive evaluation of the bath influence functional $\Ls_b$, also grows double exponentially with time. To mitigate this problem, in the next section, we will formulate an integro-differential equation which gives the time evolution of $G(-t,t)$. Thus some bath influence functionals obtained when computing $G(-t', t')$ with $t' < t$ can be reused when computing $G(-t,t)$.
Before that, however, we first present the following useful properties of the bivariate functions $G^{(0)}_s(\cdot,\cdot)$ and $B(\cdot,\cdot)$ appearing in the definitions of $\mc{U}^{(0)}$ and $\Ls_b$:

\begin{proposition}
\
\begin{itemize}
 \item For any $\si\le \sf$, we have 
 \begin{equation}\label{prop: B and Gs}
\begin{split}
&G_s^{(0)}(-\sf,-\si) = G_s^{(0)}(\si,\sf)^\dagger \text{~for $\si \neq 0$ and $\sf \neq 0$}, \\
&B(-\sf,-\si) = \overline{B(\si,\sf)}.
\end{split}
\end{equation}
\item For any $\si \le \sf$ and $\Delta t\geq 0$, we have 
\begin{equation}\label{prop: B}
  B(\si,\sf) =
  \begin{cases}
    B(\si- \Delta t,\sf -\Delta t),
    & \text{if } \si \le \sf < 0, \\
   B(\si + \Delta t ,\sf + \Delta t),
    & \text{if } 0 < \si \le \sf,\\
     B(\si- \Delta t,\sf + \Delta t), 
     & \text{if } \si < 0 \le \sf .  
 \end{cases} 
\end{equation}
\item For any $\si$, $\sf$ and $\Delta t$ satisfying $\si \le \sf \le 0  \le \si+\Delta t \le \sf + \Delta t$, we have 
\begin{equation}\label{prop: B conj} 
  B(\si+\Delta t,\sf+\Delta t) = \overline{  B(\si,\sf) }.
\end{equation}
\end{itemize}
\end{proposition}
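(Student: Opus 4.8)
The plan is to reduce every identity to the two defining formulas \eqref{def:Gs} and \eqref{def:B} and then argue by case analysis. The single observation driving all the statements about $B$ is the elementary symmetry $B^*(-x) = \overline{B^*(x)}$, which is immediate from \eqref{def:B}: the functions $J$, $\coth(\beta\omega/2)$, $\cos$ and $\sin$ are real-valued, and since $\cos$ is even while $\sin$ is odd, replacing $\Delta\tau$ by $-\Delta\tau$ leaves the real part untouched and negates the imaginary part, i.e. produces the complex conjugate. Because $B(\tau_1,\tau_2) = B^*(|\tau_1|-|\tau_2|)$ sees its arguments only through $\Delta\tau = |\tau_1|-|\tau_2|$, each claim about $B$ reduces to tracking how $\Delta\tau$ transforms under the relevant map.

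For the three shift relations in \eqref{prop: B}, I would check case by case that $\Delta\tau$ is invariant. In the first two cases the shift by $\mp\Delta t$ keeps both arguments strictly on the same side of the origin, so both absolute values change by the additive constant $\Delta t$ and their difference is preserved; in the third case the two arguments move in opposite directions but each stays on its own side of $0$, so $|\si-\Delta t|-|\sf+\Delta t| = -\si-\sf$ is again unchanged. The equality then follows since $B$ depends only on $\Delta\tau$. For the second line of \eqref{prop: B and Gs}, the reflection $(\si,\sf)\mapsto(-\sf,-\si)$ sends $\Delta\tau = |\si|-|\sf|$ to $|\sf|-|\si| = -\Delta\tau$, so $B(-\sf,-\si) = B^*(-\Delta\tau) = \overline{B^*(\Delta\tau)} = \overline{B(\si,\sf)}$. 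Finally, \eqref{prop: B conj} is the one situation where the shift carries both arguments across the origin, from $\le 0$ to $\ge 0$, turning $\Delta\tau = \sf-\si$ into $\si-\sf = -\Delta\tau$; combining this with $B^*(-x) = \overline{B^*(x)}$ yields the conjugation identity.

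The remaining statement, the first line of \eqref{prop: B and Gs}, is handled through the piecewise structure of \eqref{def:Gs}. Under the reflection $(\si,\sf)\mapsto(-\sf,-\si)$ the first branch $\si\le\sf<0$ maps into the second branch $0<-\sf\le-\si$ and vice versa, while the third branch $\si<0<\sf$ maps to itself. In each case I would read off $G_s^{(0)}(-\sf,-\si)$ from \eqref{def:Gs} and compare with $G_s^{(0)}(\si,\sf)^\dagger$, using that $H_s$ is Hermitian so that $(\ee^{cH_s})^\dagger = \ee^{\bar{c}H_s}$, together with the Hermiticity of the observable $O_s$; the two sides then coincide. This is also where the hypothesis $\si\neq 0$, $\sf\neq 0$ is needed: at a boundary the image point would land on a branch interface (e.g. $\sf=0$ carries the third-branch point $(\si,0)$ to the second-branch point $(0,-\si)$), and the stray $O_s$ factor of the third branch obstructs the identity, so the endpoints must be excluded.

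There is no genuine analytic obstacle here; the only thing demanding care is the bookkeeping — correctly classifying each reflected or shifted point into the appropriate branch of \eqref{def:Gs} or the correct sign regime of $\Delta\tau$, and watching the boundaries $\si=0$ and $\sf=0$. I expect the $G_s^{(0)}$ identity to be the most error-prone step, simply because it involves three branches and the non-commuting factors $O_s$ and $\ee^{cH_s}$.
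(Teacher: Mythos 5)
Your proposal is correct and follows essentially the same route as the paper's Appendix~\ref{app:proof} proof: a branch-by-branch verification of \eqref{def:Gs} using the Hermiticity of $H_s$ and $O_s$, combined with the conjugate symmetry $B^*(-x)=\overline{B^*(x)}$ read off from \eqref{def:B}, including the same observation that the stray $O_s$ factor breaks the $G_s^{(0)}$ identity at $\si=0$ or $\sf=0$. The only cosmetic difference is that you dispatch the reflection identity for $B$ uniformly by noting $\Delta\tau\mapsto-\Delta\tau$ under $(\si,\sf)\mapsto(-\sf,-\si)$, whereas the paper repeats that computation separately in each sign case.
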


\eqref{prop: B and Gs} can be verified by a case-by-case argument under different settings of $\si$ and $\sf$ and its detailed proof is placed in Appendix \ref{app:proof}. \eqref{prop: B} and \eqref{prop: B conj} are the results derived by the definition of the two-point correlation \eqref{def:B}. We remark that due to the existence of $O_s$ in the definition of $G_s^{(0)}(\cdot,\cdot)$, the first equality in \eqref{prop: B and Gs} does not hold when $\si$ or $\sf$ equal to $0$.

\subsection{Integro-differential equation for the propagator}
To derive the integro-differential equation, we begin with calculating the derivative of $G(-t,t)$. By definition \eqref{G -t t}, 
\begin{equation*}
    \begin{split}
& G(-(t+\Delta t),(t+\Delta t))\\
=   & \  \ee^{\ii  (t+\Delta t) H_s} O_s \ee^{- \ii  (t+\Delta t) H_s} +   \sum_{\substack{m=2 \\ m \text{~is even}}}^{+\infty}
  \ii^m \int_{-t \le \sb \le t} \dd \sb 
(-1)^{\#\{\sb < 0\}}  \mathcal{U}^{(0)}(-(t+\Delta t), \sb , t+ \Delta t) \cdot
    \mathcal{L}_b(\sb) \\
 & -  \Bigg( \sum_{\substack{m=2 \\ m \text{~is even}}}^{+\infty}
  \ii^m  \int_{-(t+\Delta t)}^{-t} \dd s_1 \int_{s_1 \le s_2 \cdots \le s_m \le t} \dd s_2 \cdots \dd s_m \times \\
 & \hspace{150pt} \times   (-1)^{\#\{\{s_i\}_{i=2}^m < 0\}}  \mathcal{U}^{(0)}(-(t+\Delta t), \sb , t+ \Delta t) \cdot
    \mathcal{L}_b(\sb) \Bigg) \\
 &   + \Bigg( \sum_{\substack{m=2 \\ m \text{~is even}}}^{+\infty}
  \ii^m  \int^{t+\Delta t}_{t} \dd s_m \int_{-(t+\Delta t)\le s_1\le \cdots s_{m-1} \le s_m} \dd s_1 \cdots \dd s_{m-1} \times \\
 & \hspace{150pt} \times   (-1)^{\#\{\{s_i\}_{i=1}^{m-1} < 0\}}   \mathcal{U}^{(0)}(-(t+\Delta t), \sb , t+ \Delta t) \cdot
    \mathcal{L}_b(\sb) \Bigg).
       \end{split}
\end{equation*}
Here we split all integrals into three parts based on the distribution of the time sequences. Note that a minus sign is added before the second summation above since $s_1$ is restricted within $[-(t+\Delta t),-t]$ in this term and thus $(-1)^{\#\{\sb < 0\}} = -(-1)^{\#\{\{s_i\}_{i=2}^m < 0\}} $. This expression allows us to differentiate $G(-t,t)$ by the definition of the derivative: 
\begin{equation}\label{G t plus delta t}
\begin{split}
 \frac{\dd}{\dd t} G(-t,t) = & \ \lim_{\Delta t \rightarrow 0} \frac{G(-(t+\Delta t),(t+\Delta t))- G(-t,t)  }{\Delta t} \\
=& \ \frac{\dd}{\dd t}  \left( \ee^{\ii   t H_s} O_s \ee^{- \ii  t H_s} \right) + \sum_{\substack{m=2 \\ m \text{~is even}}}^{+\infty}
  \ii^m \int_{-t \le \sb \le t} \dd \sb 
(-1)^{\#\{\sb < 0\}}  \frac{\dd}{\dd t} \mathcal{U}^{(0)}(-t, \sb , t) \cdot \mathcal{L}_b(\sb)\\
 & - \Bigg( \sum_{\substack{m=2 \\ m \text{~is even}}}^{+\infty}
  \ii^m  \int_{-t\le s_2\le \cdots \le s_m \le t} \dd s_2 \cdots \dd s_m\times \\
  & \hspace{50pt}\times (-1)^{\#\{\{s_i\}_{i=2}^m < 0\}} \mathcal{U}^{(0)}(-t, -t,\underline{s_2 ,\cdots ,s_m}, t) \cdot  \mathcal{L}_b(-t,\underline{s_2,\cdots,s_m})\Bigg) \\
 & + \Bigg( \sum_{\substack{m=2 \\ m \text{~is even}}}^{+\infty}
  \ii^m  \int_{-t \le s_1\le \cdots \le s_{m-1} \le t} \dd s_1 \cdots \dd s_{m-1} \times \\
  & \hspace{50pt} \times  (-1)^{\#\{\{s_i\}_{i=1}^{m-1} < 0\}} \mathcal{U}^{(0)}(-t, \underline{s_1,\cdots, s_{m-1}},t,t) \cdot \mathcal{L}_b(\underline{s_1,\cdots,s_{m-1}},t) \Bigg).
\end{split} 
\end{equation}
Using the definition \eqref{def U0} of $\mathcal{U}^{(0)}$, the derivative in the first series is computed by 
\begin{align*}
\frac{\dd}{\dd t} \mathcal{U}^{(0)}(-t, \sb , t) = & \  \frac{\dd}{\dd t} \left(G_s^{(0)}(s_m, t) \right) W_s G_s^{(0)}(s_{m-1}, s_{m}) W_s
  \cdots W_s G_s^{(0)}(s_1, s_2) W_s G_s^{(0)}(-t, s_1)\\
  & + G_s^{(0)}(s_m, t)  W_s G_s^{(0)}(s_{m-1}, s_{m}) W_s
  \cdots W_s G_s^{(0)}(s_1, s_2) W_s \frac{\dd}{\dd t}  \left( G_s^{(0)}(-t, s_1) \right) \\
  = & \ \ii H_s  \mathcal{U}^{(0)}(-t, \sb , t) - \ii \mathcal{U}^{(0)}(-t, \sb , t)H_s .
\end{align*}
Note that $\frac{\dd}{\dd t}  \left( \ee^{\ii   t H_s} O_s \ee^{- \ii  t H_s} \right) = \ii H_s \ee^{\ii   t H_s} O_s \ee^{- \ii  t H_s} - \ii  \ee^{\ii   t H_s} O_s \ee^{- \ii  t H_s} H_s$, which yields
\begin{equation}
\frac{\dd}{\dd t}  \left( \ee^{\ii   t H_s} O_s \ee^{- \ii  t H_s} \right) + \sum_{\substack{m=2 \\ m \text{~is even}}}^{+\infty}
  \ii^m \int_{-t \le \sb \le t} \dd \sb 
(-1)^{\#\{\sb < 0\}}  \frac{\dd}{\dd t} \mathcal{U}^{(0)}(-t, \sb , t) \cdot \mathcal{L}_b(\sb)=\ii [H_s,G(-t,t)].
\end{equation}
As for the other two series on the right-hand side of \eqref{G t plus delta t}, we can simplify them by using
\begin{displaymath}
\mathcal{U}^{(0)}(-t, -t,\sb, t) = W_s  \mathcal{U}^{(0)} (-t,\sb, t), \qquad
\mathcal{U}^{(0)}(-t, \sb,t,t) = \mathcal{U}^{(0)} (-t,\sb, t) W_s.
\end{displaymath}
Summarizing all the simplifications of \eqref{G t plus delta t}, we obtain
\begin{multline}\label{dyson int diff eq}
\frac{\dd }{\dd t}G(-t,t) = \ii [H_s,G(-t,t)] +
 \sum_{\substack{m=1 \\ m \text{~is odd}}}^{+\infty}
  \ii^{m+1}  \Big(   \int_{-t\le \sb \le t} \dd \sb  (-1)^{\#\{\sb < 0\}} W_s \mathcal{U}^{(0)}(-t, \sb , t)  \Ls_b(\sb,t)\\
  -  \int_{-t\le \sb \le t} \dd \sb  (-1)^{\#\{\sb < 0\}}  \mathcal{U}^{(0)}(-t, \sb , t) W_s \Ls_b(-t,\sb) \Big).
\end{multline}
Note that $m$ takes odd values in \eqref{dyson int diff eq} since the underlined time sequences in \eqref{G t plus delta t} have odd numbers of components.
The equation \eqref{dyson int diff eq} has already provided us an integro-differential equation to work on. However, we may make further simplification by combining the two integrals into one using the following lemma:
\begin{lemma}\label{lemma:dqmc}
For any time sequence $-t<s_1<\cdots< s_{m}<t$, define $s'_{i} = -s_{m+1-i}$ for $i=1,\cdots,m$. Then $-t <s'_1<\cdots< s'_m<t$ and
\begin{align}
 \mathcal{U}^{(0)}(-t,\sb',t) = &  \ \mathcal{U}^{(0)}(-t,\sb,t)^\dagger \text{~for $s_i \neq 0$, $i = 1,\cdots,m$}, \label{lemma:U0}\\
\Ls_b(-t,\sb')  = & \ \overline{ \Ls_b(\sb,t) }. \label{lemma:Lb}
\end{align}
\end{lemma}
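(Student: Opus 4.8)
The plan is to reduce both identities to the two elementary relations furnished by the preceding Proposition, namely $G_s^{(0)}(\si,\sf)^\dagger = G_s^{(0)}(-\sf,-\si)$ (valid for $\si,\sf\neq 0$) and $B(-\sf,-\si)=\overline{B(\si,\sf)}$ from \eqref{prop: B and Gs}, after which everything follows by reindexing. The preliminary observation is that $s_i' = -s_{m+1-i}$ is negation composed with order reversal: since $s_1<\cdots<s_m$ gives $-s_m<\cdots<-s_1$ and $-t<s_i<t$ gives $-t<-s_i<t$, we immediately obtain $-t<s_1'<\cdots<s_m'<t$. Note also that $t>0$ is forced by $-t<s_1<t$, so $\pm t\neq 0$.

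For \eqref{lemma:U0}, I would start from the definition \eqref{def U0} and take the Hermitian conjugate, which reverses the order of the factors and replaces each $G_s^{(0)}$ by its dagger while leaving each $W_s$ untouched (as $W_s=\hat\sigma_z$ is Hermitian). This turns $\mathcal{U}^{(0)}(-t,\sb,t)^\dagger$ into $G_s^{(0)}(-t,s_1)^\dagger W_s G_s^{(0)}(s_1,s_2)^\dagger W_s \cdots W_s G_s^{(0)}(s_m,t)^\dagger$. Applying $G_s^{(0)}(\si,\sf)^\dagger = G_s^{(0)}(-\sf,-\si)$ to each factor—legitimate because every argument, including $\pm t$ and each $s_i\neq 0$, is nonzero—converts the internal factor $G_s^{(0)}(s_j,s_{j+1})^\dagger$ into $G_s^{(0)}(-s_{j+1},-s_j)$, the leftmost factor into $G_s^{(0)}(-s_1,t)$, and the rightmost into $G_s^{(0)}(-t,-s_m)$. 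The final step is to recognize, via the substitution $-s_{m+1-i}=s_i'$, that this product is precisely $\mathcal{U}^{(0)}(-t,\sb',t)$ read off from \eqref{def U0}, matching factor by factor.

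For \eqref{lemma:Lb}, observe first that both $\Ls_b(\sb,t)$ and $\Ls_b(-t,\sb')$ carry $m+1$ (even) arguments, so neither vanishes trivially. Writing the ordered arguments of $\Ls_b(\sb,t)$ as $u_1<\cdots<u_{m+1}$ (with $u_i=s_i$ and $u_{m+1}=t$) and those of $\Ls_b(-t,\sb')$ as $v_1<\cdots<v_{m+1}$, the reflection gives the clean relation $v_j=-u_{m+2-j}$. The order-reversing bijection $\phi(i)=m+2-i$ on $\{1,\dots,m+1\}$ then induces a bijection on the perfect matchings in \eqref{eq:all linking pairs}: a pair $(u_j,u_k)$ with $j<k$ corresponds to $(v_{\phi(k)},v_{\phi(j)})$ with $\phi(k)<\phi(j)$, and $v_{\phi(k)}=-u_k$, $v_{\phi(j)}=-u_j$. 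I would then apply $B(-u_k,-u_j)=\overline{B(u_j,u_k)}$ factorwise so that the matching product over the $v$'s equals the conjugate of the matching product over the $u$'s, and finally sum over all matchings in \eqref{eq:L all pair} and pull the (finite) conjugation outside the sum.

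The computations are routine; the only genuine risk is index bookkeeping—keeping the order reversal induced by the dagger, the sign flip, and the reindexing $i\mapsto m+1-i$ (respectively $j\mapsto m+2-j$) mutually consistent so that the rearranged product and sum land exactly on the definitions of $\mathcal{U}^{(0)}(-t,\sb',t)$ and $\mQ(-t,\sb')$. I would guard against this by first checking the factor-by-factor matching on a small explicit case such as $m=3$ before presenting the general argument.
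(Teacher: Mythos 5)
Your proposal is correct and follows essentially the same route as the paper: both parts reduce to the factorwise application of \eqref{prop: B and Gs}, with \eqref{lemma:U0} obtained by daggering the product in \eqref{def U0} (you merely run the chain of equalities in the opposite direction from the paper) and \eqref{lemma:Lb} obtained from the order-reversing, sign-flipping bijection on perfect matchings, which is exactly the paper's reindexing $j'=m+1-k$, $k'=m+1-j$ in Appendix A written in slightly cleaner notation. No gaps; the nonvanishing conditions ($s_i\neq 0$, $t>0$) are handled just as in the paper.
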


The statement \eqref{lemma:U0} for the system associated $ \mathcal{U}^{(0)}$ can be checked by 
\begin{displaymath}
\begin{split}
 \mathcal{U}^{(0)}(-t,\sb',t)  = & \  G_s^{(0)}(s'_m, t) W_s G_s^{(0)}(s'_{m-1}, s'_{m}) W_s
  \cdots W_s G_s^{(0)}(s'_1, s'_2) W_s G_s^{(0)}(-t, s'_1)\\
  = & \  G_s^{(0)}(-s_1, t) W_s G_s^{(0)}(-s_{2}, -s_{1}) W_s
  \cdots W_s G_s^{(0)}(-s_m, -s'_{m-1}) W_s G_s^{(0)}(-t, -s_m)\\
  = & \  G_s^{(0)}(-t, s_1)^\dagger W_s G_s^{(0)}(s_{1}, s_{2})^\dagger W_s
  \cdots W_s G_s^{(0)}(s_{m-1}, s_{m})^\dagger W_s G_s^{(0)}(s_m, t)^\dagger \\
  = & \ \mathcal{U}^{(0)}(-t,\sb,t)^\dagger 
  \end{split}
\end{displaymath}
using \eqref{prop: B and Gs}. The equation \eqref{lemma:Lb} can also be verified using \eqref{prop: B and Gs}. The rigorous proof can be found in Appendix \ref{app:proof}. 

Now we apply the change of variables as shown in Lemma \ref{lemma:dqmc} to the second integral in \eqref{dyson int diff eq}. Note that \eqref{lemma:U0} holds almost everywhere in the domain of integration. We then have 
\begin{equation}\label{change of variable}
   \begin{split}
&\int_{-t\le \sb \le t} \dd \sb  (-1)^{\#\{\sb < 0\}}  \mathcal{U}^{(0)}(-t, \sb , t) W_s \Ls_b(-t,\sb) \\
= \ & \int_{-t\le \sb' \le t}  \dd \sb' (-1)^{\#\{\sb' > 0\}}   \mathcal{U}^{(0)}(-t, \sb' , t)^\dagger W_s \overline{ \Ls_b(\sb',t)} \\
= \ & -\int_{-t\le \sb' \le t} \dd \sb' (-1)^{\#\{\sb' < 0\}} \left( W_s \mathcal{U}^{(0)}(-t, \sb' , t)    \Ls_b(\sb',t)  \right)^\dagger.
   \end{split}
\end{equation}
In the last equality above, we have used the fact that $\sb'=(s'_1,\cdots,s'_m)$ has odd number of components and thus $(-1)^{\#\{\sb' > 0\}} = -(-1)^{\#\{\sb' < 0\}}$ for almost every $\sb'$. Inserting \eqref{change of variable} back to \eqref{dyson int diff eq}, we reach a simpler integral-differential equation for $G(-t,t)$:
\begin{proposition}
The propagator $G(-t,t)$ satisfies the integro-differential equation
\begin{equation}\label{dyson int diff eq 2}
\frac{\dd }{\dd t}G(-t,t) = \ii [H_s,G(-t,t)] + \sum_{\substack{m=1 \\ m \text{~is odd}}}^{+\infty}  \ii^{m+1} \int_{-t\le \sb \le t} \dd \sb  (-1)^{\#\{\sb < 0\}} (\mathcal{K}(\sb,t) +\mathcal{K}(\sb,t)^\dagger ) 
\end{equation}
for $t>0$, where 
\begin{displaymath}
\mathcal{K}(\sb,t) =  W_s \mathcal{U}^{(0)}(-t, \sb , t)  \Ls_b(\sb,t).
\end{displaymath}
\end{proposition}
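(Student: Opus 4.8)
The plan is to take the integro-differential equation \eqref{dyson int diff eq}---already obtained from the Dyson series by differentiating in $t$ and splitting the integration domain into the bulk and the two moving endpoints---as the starting point, and to show that its two integral series are Hermitian conjugates of one another up to sign, so that they collapse into the single term $\mathcal{K}(\sb,t)+\mathcal{K}(\sb,t)^\dagger$. Concretely, \eqref{dyson int diff eq} has the shape $\ii[H_s,G] + \sum_{m \text{ odd}}\ii^{m+1}(I_1 - I_2)$, where $I_1 = \int_{-t\le\sb\le t}(-1)^{\#\{\sb<0\}}\,W_s\,\mathcal{U}^{(0)}(-t,\sb,t)\,\Ls_b(\sb,t)\,\dd\sb$ is already exactly $\int(-1)^{\#\{\sb<0\}}\mathcal{K}(\sb,t)\,\dd\sb$, and $I_2$ is the twin integral with the endpoint inserted on the left. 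The entire content of the Proposition is therefore to rewrite $I_2$ as $-\int(-1)^{\#\{\sb<0\}}\mathcal{K}(\sb,t)^\dagger\,\dd\sb$, i.e.\ to establish \eqref{change of variable}.

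To do this I would apply the reflection $s_i' = -s_{m+1-i}$ of Lemma \ref{lemma:dqmc} as a change of variables inside $I_2$. First I would note that this reflection is an involutive, measure-preserving bijection of the ordered simplex $\{-t\le s_1\le\cdots\le s_m\le t\}$ onto itself, so $\dd\sb = \dd\sb'$ and the domain is unchanged; second, because the reflection flips the sign of every coordinate, $\#\{\sb<0\} = \#\{\sb'>0\}$, which converts the phase factor. Then the two identities \eqref{lemma:U0} and \eqref{lemma:Lb}, read in the involuted form $\mathcal{U}^{(0)}(-t,\sb,t) = \mathcal{U}^{(0)}(-t,\sb',t)^\dagger$ and $\Ls_b(-t,\sb) = \overline{\Ls_b(\sb',t)}$, turn the integrand $\mathcal{U}^{(0)}(-t,\sb,t)W_s\Ls_b(-t,\sb)$ into $\mathcal{U}^{(0)}(-t,\sb',t)^\dagger W_s \overline{\Ls_b(\sb',t)}$, which is the first line of \eqref{change of variable}.

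The repackaging into a Hermitian conjugate is then purely algebraic: since $W_s = \hat\sigma_z$ is Hermitian and $\Ls_b$ is a scalar, I have $\mathcal{U}^{(0)}(-t,\sb',t)^\dagger W_s \overline{\Ls_b(\sb',t)} = \bigl(W_s\,\mathcal{U}^{(0)}(-t,\sb',t)\,\Ls_b(\sb',t)\bigr)^\dagger = \mathcal{K}(\sb',t)^\dagger$, the scalar $\overline{\Ls_b}$ passing freely through the product while the dagger reverses the matrix order. Finally, using that $m$ is odd gives $(-1)^{\#\{\sb'>0\}} = -(-1)^{\#\{\sb'<0\}}$; substituting back and relabeling $\sb'\to\sb$ yields $I_2 = -\int(-1)^{\#\{\sb<0\}}\mathcal{K}(\sb,t)^\dagger\,\dd\sb$, so that $I_1 - I_2 = \int(-1)^{\#\{\sb<0\}}\bigl(\mathcal{K}(\sb,t)+\mathcal{K}(\sb,t)^\dagger\bigr)\,\dd\sb$, which is precisely \eqref{dyson int diff eq 2}.

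I expect the only real obstacle to be the sign-and-conjugate bookkeeping: tracking the two independent sources of sign flips (the coordinate reflection producing $\#\{\sb'>0\}$, and the odd parity of $m$ producing the extra $-1$) and confirming the order reversal under $\dagger$. A minor technical point worth flagging is that Lemma \ref{lemma:dqmc} is valid only when no $s_i$ equals $0$; since the set $\{\,\exists i: s_i = 0\,\}$ has Lebesgue measure zero in the simplex, the identity holds almost everywhere and the change of variables remains valid as an equality of integrals.
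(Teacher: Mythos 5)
Your proposal is correct and follows essentially the same route as the paper: starting from \eqref{dyson int diff eq}, applying the reflection $s_i'=-s_{m+1-i}$ of Lemma \ref{lemma:dqmc} as a measure-preserving change of variables in the second integral, converting $(-1)^{\#\{\sb<0\}}$ to $(-1)^{\#\{\sb'>0\}}=-(-1)^{\#\{\sb'<0\}}$ via the odd parity of $m$, and repackaging $\mathcal{U}^{(0)}(-t,\sb',t)^\dagger W_s \overline{\Ls_b(\sb',t)}$ as $\mathcal{K}(\sb',t)^\dagger$ using the Hermiticity of $W_s$ and the scalar nature of $\Ls_b$, which is exactly the computation \eqref{change of variable}. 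Your remark that the identities hold only almost everywhere (excluding the measure-zero set where some $s_i=0$) matches the paper's own caveat, so nothing is missing.
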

Based on the evolution equation \eqref{dyson int diff eq 2}, one can consider solving $G(-t,t)$ iteratively using Runge-Kutta type methods. To avoid large values of $m$ in the computation, we truncate the series up to a certain odd integer, and evaluate the high-dimensional integrals stochastically via Monte Carlo approximation. Compared to the original bare dQMC for the Dyson series \eqref{bare dqmc}, solving \eqref{dyson int diff eq 2} should be more efficient as $m$ decreases by $1$ for each term of the summation. We also point out that the numerical methods based on the integro-differential equation preserve the Hermitian property \eqref{G hermitian} of $G(-t,t)$ as one can easily check that the right-hand side of \eqref{dyson int diff eq 2} is always Hermitian under Monte Carlo approximation, while this is not guaranteed by bare dQMC \eqref{bare dqmc} and may be badly violated when the number of samples $\Ms$ is insufficient. Moreover, since the equation provides us the time evolution of $G(-t,t)$, we are now able to reuse the calculated bath influence functionals which will give the major improvement on the efficiency of the algorithm. Our numerical method will be detailed in the following section.

\subsection{Numerical method}
To discretize \eqref{dyson int diff eq 2}, we consider a numerical scheme inspired by the second-order Heun's method. For a general ordinary differential equation
\begin{displaymath}
\frac{\mathrm{d}}{\mathrm{d}t} u(t) = f(t, u(t)), \quad t \in [0,t_{\max}],
\end{displaymath}
the scheme reads
\begin{equation}\label{Heun}
\begin{split}
U_i^* &= U_{i-1} + h f(t_{i-1}, U_{i-1}), \\
U_i &= \frac{1}{2} (U_{i-1} + U_i^*) + \frac{1}{2} h f(t_i, U_i^*),
\end{split}
\end{equation}
where $h$ is the time step length, $t_i = i\cdot h$, and $U_i$ is
the numerical approximation of $u(t_i)$. For our integro-differential equation, the sums over high-dimensional integrals should be evaluated in the same way as the bare dQMC \eqref{bare dqmc} using Monte Carlo approximation. In the $i$th time step, suppose we have $\Ms_i$ samples of time sequences $\Ss_i = \{\sb^{(j)}_i\}_{j=1}^{\Ms_i}$ drawn from the domain
\begin{equation}
T_i = \bigcup_{\substack{m=1 \\ m \text{~is odd}}}^{\bar{M}} T^{(m)}_i,
\end{equation}
where $T^{(m)}_i$ is the $m$-dimensional simplex defined by
\begin{equation}
T^{(m)}_i:=\{ \sb=(s_1,\cdots,s_m) \ | \ {-t_i} \le s_1 \le \cdots \le s_m \le t_i \},
\end{equation}
and each sampled time sequence satisfies the probability density function $\P_i(m,\sb)$ for $m=1,3,\cdots,\bar{M}$. Thereby, the scheme coupling Heun's method with Monte Carlo integration to approximate $G(-t_i,t_i)$ is formulated as 
\begin{equation}\label{dyson int diff eq scheme}
  \begin{split}
& G_i^* =    G_{i-1} + h\ii [H_s,G_{i-1}]+ \frac{h}{{\Ms}_{i-1}}\sum^{{\Ms}_{i-1}}_{j=1}  \frac{1}{ \P_{i-1}( m_{i-1}^{(j)},\sb_{i-1}^{(j)})  }\cdot \ii^{m_{i-1}^{(j)}+1 }  \times \\
& \hspace{50pt} \times  (-1)^{\#\{\sb_{i-1}^{(j)} < 0\}} (\mathcal{K}(\sb_{i-1}^{(j)},t_{i-1}) +\mathcal{K}(\sb_{i-1}^{(j)},t_{i-1})^\dagger ), \\  
& G_i =   \frac{1}{2}(G_{i-1} + G_i^*) + \frac{1}{2}h\ii[H_s,G_i^*] + \frac{h}{{2\Ms}_{i}}\sum^{{\Ms}_{i}}_{j=1} \frac{1}{ \P_{i}( m_{i}^{(j)},\sb_i^{(j)} ) } \cdot  \ii^{m_{i}^{(j)}+1 } \times  \\
&  \hspace{50pt}  \times  (-1)^{\#\{\sb_{i}^{(j)} < 0\}} (\mathcal{K}(\sb_{i}^{(j)},t_{i}) +\mathcal{K}(\sb_{i}^{(j)},t_{i})^\dagger ), \text{~for~} \sb^{(j)}_i \in \Ss_i
  \end{split}
\end{equation}
for $i = 1,2,\cdots,N$ where $Nh = t_{\max}$ with initial condition $G_{0} = O_s$. The set of samples $\Ss_i$ are drawn independently according to the distribution $\P_i$. We remark that one can apply higher order schemes to achieve better order of accuracy with respect to step length $h$. Throughout the current work, however, we use the Heun's method which can already provide satisfactory numerical results. The accuracy of discretization will be verified by numerical tests later in Section \ref{sec:accuracy}. We also refer readers to the numerical experiments in \cite[Section 7]{Cai2020}, where Heun's method is applied to a number of spin-boson simulations and shows good performance.

The major computational cost lies in the evaluation of $\mathcal{K}(\sb,t_i)$ in each time step. While evaluating each $\mathcal{K}(\sb,t_i)$, 
the bath influence functional $\Ls_b(\sb,t_i)$ is generally much more expensive than the $\mc{U}^{(0)}(-t_i,\sb,t_i)$, especially when $m$ is large. In fact, the computational cost of $\Ls_b$, which is essentially the hafnian of a matrix \cite{Barvinok99}, grows at least exponentially with respect to $m$ using some recent indirect methods such as Bj\"orklund's algorithm \cite{Bjorklund2019} or the inclusion-exclusion principle \cite{Yang2021}, while the cost of $\mc{U}^{(0)}$ grows only linearly with $m$ since $\mc{U}^{(0)}$ is a product of $2m+1$ matrices as defined in \eqref{def U0}. A comparison of the computational time for these two parts will be performed later in Section \ref{sec:time complexity}.

Due to the high computational cost of $\Ls_b$, the purpose of this paper is to reduce the number of bath influence functionals to be computed during the evolution of $G(-t,t)$. While the straightforward application of the numerical scheme \eqref{dyson int diff eq scheme} requires computation of different bath influence functionals in different time steps, by the invariance of the two-point bath correlation given in Propositions \eqref{prop: B and Gs} to \eqref{prop: B conj}, we can actually reuse some bath influence functionals that have been calculated in previous time steps to improve the overall efficiency. This idea utilizes the following property of $\Ls_b$, which can be easily derived from \eqref{prop: B}:   
\begin{proposition}\label{prop: L}
Given $\sb = (s_1,\cdots, s_m)\in T^{(m)}_i$ for $i=1,\cdots,N-1$ and odd number $m=1,3,\cdots,\bar{M}$, define the operator $\Is_j(\sb) = (\st_1,\cdots, \st_m)$ such that
\begin{equation} \label{s tilde}
  \st_k =
  \begin{cases}
   s_k + jh,
    & \text{if } s_k \geq 0, \\
   s_k-jh,
    & \text{if } s_k < 0
 \end{cases}
\end{equation}
for $k=1,\cdots,m$ and $j = 0,1,\cdots,N-i$. We have $\Is_j(\sb) \in T^{(m)}_{i+j}$ and
\begin{equation*}
\Ls_b(\Is_j(\sb),t_{i+j}) = \Ls_b(\sb,t_i).
\end{equation*}
\end{proposition}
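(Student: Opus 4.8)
The plan is to recognize $\Is_j$ as the operation that pushes every time point \emph{away} from the origin by the same amount $jh$, under which $B$ is invariant on each pair; the equality of bath influence functionals then follows term by term over the matchings in $\mQ$. The single identity driving everything comes from \eqref{s tilde}: if $s_k \ge 0$ then $\st_k = s_k + jh \ge 0$, so $|\st_k| = |s_k| + jh$, while if $s_k < 0$ then $\st_k = s_k - jh < 0$, so again $|\st_k| = |s_k| + jh$. Hence $|\st_k| = |s_k| + jh$ for every $k$. Since $i \ge 1$ we have $t_i > 0$, and the appended endpoint obeys $t_{i+j} = t_i + jh$, so $|t_{i+j}| = |t_i| + jh$ as well. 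Therefore every coordinate of the augmented sequence $(\st_1,\ldots,\st_m,t_{i+j})$ has absolute value exactly $jh$ larger than the corresponding coordinate of $(s_1,\ldots,s_m,t_i)$.

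Next I would check $\Is_j(\sb) \in T^{(m)}_{i+j}$. Monotonicity is preserved: two negative points keep their order since both shift left by $jh$, two nonnegative points keep their order since both shift right by $jh$, and a negative point followed by a nonnegative one stays ordered because the former becomes more negative and the latter more positive. For the bounds, if $s_m \ge 0$ then $\st_m = s_m + jh \le t_i + jh = t_{i+j}$, and if $s_m < 0$ then $\st_m < 0 \le t_{i+j}$; symmetrically $\st_1 \ge -t_{i+j}$ in both sign cases for $s_1$. This gives $-t_{i+j} \le \st_1 \le \cdots \le \st_m \le t_{i+j}$, i.e. $\Is_j(\sb) \in T^{(m)}_{i+j}$.

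For the functional equality, note that since $m$ is odd both $(s_1,\ldots,s_m,t_i)$ and $(\st_1,\ldots,\st_m,t_{i+j})$ have $m+1$ (even) entries, so by \eqref{eq:L all pair} both $\Ls_b$ are sums over the \emph{same} index set $\mQ$ of perfect matchings of $\{1,\ldots,m+1\}$. Fixing a matching $\mf{q} \in \mQ$ and a pair of positions $(a,b)$ in it with $a<b$, the sortedness established above places the corresponding coordinates in the domain $\tau_1 \le \tau_2$ of $B$. By the absolute-value identity both coordinates of the pair have their moduli increased by $jh$, so $|\st_a| - |\st_b| = |s_a| - |s_b|$; since $B(\tau_1,\tau_2) = B^*(|\tau_1|-|\tau_2|)$ by \eqref{def:B}, the factor $B$ on this pair is unchanged. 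Equivalently this is exactly \eqref{prop: B} with $\Delta t = jh$, applied according to whether the two endpoints are both negative, both nonnegative, or straddle the origin. Hence each factor, each product over a matching, and therefore the full sum over $\mQ$ is identical, which yields $\Ls_b(\Is_j(\sb),t_{i+j}) = \Ls_b(\sb,t_i)$.

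The main obstacle is only boundary bookkeeping at $s_k = 0$: the three cases of \eqref{prop: B} are stated with strict inequalities and do not literally cover a coordinate equal to $0$ paired against the origin. I would sidestep this by invoking the representation $B(\tau_1,\tau_2) = B^*(|\tau_1|-|\tau_2|)$ from \eqref{def:B} directly, which holds for all admissible $(\tau_1,\tau_2)$ and makes the invariance $|\st_a|-|\st_b| = |s_a|-|s_b|$ immediate with no case split; alternatively one observes that $\{s_k = 0\}$ is a measure-zero set, irrelevant wherever this proposition feeds into the integrals in \eqref{dyson int diff eq 2}. Everything else is routine sign and ordering bookkeeping.
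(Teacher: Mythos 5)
Your proof is correct and follows essentially the same route the paper intends: the paper derives Proposition~\ref{prop: L} directly from the pairwise invariance \eqref{prop: B}, applied factor by factor to each $B(\cdot,\cdot)$ in every pairing of $\mQ$, which is exactly your term-by-term argument. Your packaging via the identity $|\st_k| = |s_k| + jh$ together with the representation $B(\tau_1,\tau_2)=B^*(|\tau_1|-|\tau_2|)$ from \eqref{def:B} is a clean equivalent formulation (and neatly covers the boundary case $s_k=0$ that the strict inequalities in \eqref{prop: B} do not literally address), and your verification that $\Is_j(\sb)\in T^{(m)}_{i+j}$ supplies bookkeeping the paper leaves implicit.
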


This proposition shows that a class of bath influence functionals has the same value, and thus we just need to compute one of them if multiple influence functionals appear in our computation.
    To illustrate how such reuse can be applied to the scheme \eqref{dyson int diff eq scheme}, we consider the following simple example, where we only sample one time sequence with $m=1$ (so the sequence actually reduces to a point) in each time step and consider the time evolution of the scheme up to $t=3h$:
    \begin{itemize}
        \item[(\textbf{i})] in the first time step, we pick a sample $s_1 \in (-h,h)$. Here we assume $s_1$ is negative which can be denoted by the black dot in the top panel of Figure \ref{fig:reuse}. The corresponding bath influence functional $\Ls_b(s_1,h) = B(s_1,h)$ is then calculated and can be denoted by blue arc;
        \item[(\textbf{ii})] in the second time step, $\Is_1(s_1) = s_1 - h$ is a sample in $T^{(1)}_2$ whose bath influence functional can be directly obtained from $\Ls_b(\Is_1(s_1),2h) = \Ls_b(s_1,h)$ according to Proposition \ref{prop: L}. Such reuse of computed bath influence functionals can be visualized as a stretch of the blue arc by length $h$ in Figure \ref{fig:reuse}, and the value of the blue arc is invariant after being stretched. In addition to reuse of calculations, we sample a new time point $s_2 \in (-2h,2h)$ and calculate $\Ls_b(s_2,2h)$. In Figure \ref{fig:reuse}, we assume $s_2$ is positive and $\Ls_b(s_2,2h)$ is represented by the red arc;
        \item[(\textbf{iii})] at $t=3h$, the blue arc can be further stretched by another time step $h$ and the value remains the same, meaning that we again obtain the bath influence functional directly using $\Ls_b(\Is_2(s_1),3h) = \Ls_b(s_1,h)$ where $ \Is_2(s_1) = s_1 -2h \in T^{(1)}_3$. Similarly, we can also reuse $\Ls_b(\Is_1(s_2),3h) = \Ls_b(s_2,2h)$ with $\Is_1(s_2) = s_2 + h \in T^{(1)}_3$, which corresponds to shifting the red arc to the right by $h$. Afterwards, we draw another new sample $s_3 \in (-3h,3h)$ and calculate $\Ls_b(s_3,3h)$ denoted by the green arc.
    \end{itemize}
For general $m$, this reuse of bath influence functionals can be similarly understood by replacing the arcs by the summation of diagrams such as in \eqref{eq:all pairs example}. We remark that such invariance does not hold form the system functional $\mc{U}^{(0)}$, which does not have a similar property as Proposition \ref{prop: L} due to the existence of $O_s$ in its definition.  

\begin{figure}[h]

\begin{tikzpicture}

\draw[thick] (-5,0)--(-3,0);
\draw[thick] (-6,-1.5) -- (-2,-1.5);
\draw[thick] (-7,-3) -- (-1,-3);
\draw[thick,dashed]  (-4,0)--(-4,-3); 
\draw[thick,dashed] (-5,0)--(-5,-3);
\draw[thick,dashed] (-3,0)--(-3,-3);
\draw[thick,dashed]  (-6,-1.5)--(-6,-3);
\draw[thick,dashed]  (-2,-1.5)--(-2,-3);
\node[below left] at (-5,0) { $-h$};
\node[below right] at (-3,0) { $h$};
\node[below left] at (-6,-1.5) { $-2h$};
\node[below right] at (-2,-1.5) { $2h$};
\node[below left] at (-6.8,-3) { $-3h$};
\node[below right] at (-1.2,-3) { $3h$};
\draw[thick] (-5,-0.1)--(-5,0.1);
\draw[thick] (-3,-0.1)--(-3,0.1);
\draw[thick] (-6,-1.6)--(-6,-1.4);
\draw[thick] (-2,-1.6)--(-2,-1.4);
\draw[thick] (-7,-3.1)--(-7,-2.9);
\draw[thick] (-1,-3.1)--(-1,-2.9);
\draw plot[only marks,mark =*, mark options={color=black, scale=0.5}]coordinates { (-4.5,0) (-5.5,-1.5) (-3.5,-1.5) (-6.5,-3) (-2.5,-3) (-4.7,-3)  };

\draw[blue] (-4.5,0) to[bend left=60] (-3,0);

\draw[blue] (-5.5,-1.5) to[bend left=60] (-2,-1.5);

\draw[blue] (-6.5,-3)  to[bend left=50] (-1,-3);

\draw[red] (-3.5,-1.5) to[bend left=60] (-2,-1.5);

\draw[red] (-2.5,-3) to[bend left=50] (-1,-3);

\draw[green] (-4.7,-3) to[bend left=50] (-1,-3);

\end{tikzpicture}

\caption{Calculation reuse of $\Ls_b(\sb_i,t_i)$ for $m=1$.}

\label{fig:reuse}
\end{figure}

As can be observed from Figure \ref{fig:reuse}, given any time sequence $\sb_j$ at the $j$th time step for $j< i$, shifting or stretching it to $\Is_{i-j}(\sb_j)$ always moves the nodes away from $t=0$ by at least length $h$. This means all the samples obtained by stretching or shifting have no time points falling between $-h$ and $h$. As a result, the samples for the $i$th time step cannot be only inherited from previous time steps. To complete the sampling of $T_i$, we also need to draw extra samples from $\sT_i =\bigcup_{\substack{m=1 \\ m \text{~is odd}}}^{\bar{M}} \sT^{(m)}_i$ where     
\begin{equation}\label{T hat}
 \sT^{(m)}_{i}  = \left\{ ( s_1,\cdots, s_m) \in T^{(m)}_i \ \big| \ \exists \ s_j \text{~such that~} {-h} < s_j < h  \right\}.
\end{equation} 
For example, in Figure \ref{fig:reuse}, the nodes building up the red diagrams in $(-2h,2h)$ and green diagrams $(-3h,3h)$ should be newly drawn from $\sT_{2}$ and $\sT_{3}$ respectively since these diagrams can never be obtained from shifting or stretching diagrams at previous time steps. Based on the definition \eqref{T hat}, we may express $T^{(m)}_i$ as 
\begin{displaymath}
 T^{(m)}_i = \bigcup_{j=1}^{i} \Is_{i-j}(\sT^{(m)}_j)
\end{displaymath}
where $\Is_{i-j}(\sT^{(m)}_j)$ is the collection of time sequences which are shifted or stretched from $j$th step: 
 \begin{displaymath}
\Is_{i-j}(\sT^{(m)}_j) = \{  \Is_{i-j}(\sb) \ | \ \sb \in \sT^{(m)}_j  \}.
\end{displaymath}
One may easily see that $\Is_{i-j}(\sT^{(m)}_j)$ are pairwise disjoint for $j=1,\cdots,i$ and thus 
\begin{equation}\label{T vol relation}
\sum_{j=1}^i |\sT^{(m)}_j| = \sum_{j=1}^i |\Is_{i-j}(\sT^{(m)}_j)| = |T^{(m)}_i|.
\end{equation}
Hence the volume of each $\sT^{(m)}_j$ can be calculated by 
\begin{equation}\label{T star vol}
|\sT^{(m)}_j| = |T^{(m)}_{j}| - |T^{(m)}_{j-1}| =  \frac{1}{m!} [(2t_j)^m - (2t_{j-1})^m].
\end{equation}

To implement the numerical scheme \eqref{dyson int diff eq scheme}, we sample time sequences $\hat{\Ss}_i \subset \sT_i$ in each step and evaluate the corresponding bath influence functionals. Afterwards, we construct $\Ss_i$ by combining the new samples $\hat{\Ss}_i$ with the old samples $\Is_{i-j}(\hat{\Ss}_j)$ for $j=1,\cdots,i-1$ whose bath influence functionals can be directly reused by Proposition \ref{prop: L}, and then evaluate $G_i$ according to \eqref{dyson int diff eq scheme}. Such a procedure is described by the Algorithm \ref{algo:dyson}.   
\begin{algorithm}[ht]
  \caption{Dyson series}\label{algo:dyson}
  \begin{algorithmic}[1]
\State \textbf{input}  $\hat{\Ss}_{i}= \{\sb^{(j)}_i\}_{j=1}^{\hat{\Ms}_i} \subset \sT_{i}$ for $i=1,\cdots,N$
  \medskip
    \State Set $G_{0}  \gets \text{Id}$ 
      \medskip
  \For{$i$ from $1$ to $N$}  
  \medskip
  \State Compute $\hat{L}_i =  \{\Ls_b(\sb^{(j)}_i,t_i)\}_{j=1}^{\hat{\Ms}_i}$
    \medskip
\State Set $\Ss_i \gets \bigcup_{j=1}^i \Is_{i-j}(\hat{\Ss}_j)$ \Comment{\emph{Shift/stretch samples}}
  \medskip
  \State Set $L_i \gets \bigcup_{j=1}^i \hat{L}_j$   \Comment{\emph{Reuse bath calculation}}
  \medskip
\State Compute $G_i$ by scheme \eqref{dyson int diff eq scheme} based on $\Ss_i $ and $L_i$ 
 \EndFor
  \medskip
  \State \textbf{return} $G_{i}$ for $i = 1,\cdots, N$
   \end{algorithmic}
\end{algorithm}

To complete the implementation, we need to specify the sampling strategy for the input $\hat{\Ss}_i$, which is associated with the probability density function $\P_i(m,\sb)$ in \eqref{dyson int diff eq scheme}. Ideally, the number of samples in $\hat{\Ss}_{i}$ should be proportional to the integral of the absolute value of the bath influence functional: 
\begin{equation}
\hat{\Ms}_i^{(m)} \propto   \int_{ \sb\in \sT^{(m)}_i } \dd \sb  |\Ls_b(\sb,t_i)| =  \int_{\sb\in \sT^{(m)}_i } \dd \sb \left| \sum_{\mf{q} \in \mQ(\sb,t_i)} \prod_{(s_j,s_k) \in \mf{q}} B(s_j,s_k) \right|.
\end{equation}
In practice, as the integral is difficult to evaluate, we replace $B(s_j, s_k)$ by an empirical constant $\mathcal{B} \in (0, \max |B|)$, so that
\begin{equation}\label{P s* in T*}
\hat{\Ms}_i^{(m)} =     \frac{\hat{\Ms}^{(1)}_1}{\hat{\lambda}} \cdot  |\sT^{(m)}_i| \cdot m!! \mathcal{B}^{\frac{m+1}{2}} 
=    \frac{\hat{\Ms}^{(1)}_1}{\hat{\lambda}} \cdot  \frac{(2t_i)^m - (2t_{i-1})^m}{(m-1)!!} \cdot \mathcal{B}^{\frac{m+1}{2}}
\end{equation}
where $\hat{\lambda} = 2 \mathcal{B} h$ is the normalizing factor. In the numerical implementation, one may first assign $\hat{\Ms}^{(1)}_1 = \hat{\Ms}_0$, and the other $\hat{\Ms}^{(m)}_i$ can then be set as the nearest integer to the right-hand side of the formula above. Afterwards, we generate each time sequence $\sb = (s_1, \cdots, s_m) \in \sT^{(m)}_i$ by drawing a sample from the uniform distribution $U(\sT^{(m)}_i)$. The following theorem provides the explicit expression for the probability density $\P_i(m,\sb)$ appearing in scheme \eqref{dyson int diff eq scheme}:

\begin{proposition}\label{thm:P}
For any $i =1,2,\cdots,N$ and $m = 1,3,\cdots,\bar{M}$, $\P_i(m,\sb)$ is given by  
\begin{equation}\label{P}
\P_i(m,\sb) =  \frac{1}{\lambda_i} \cdot m!! \mathcal{B}^{\frac{m+1}{2}}
\end{equation}
where 
\begin{displaymath}
\lambda_i =  \sum_{\substack{m'=1 \\ m' \text{~is odd}}}^{\bar{M}} \frac{(2t_i)^{m'}}{(m'-1)!!}\cdot \mathcal{B}^{\frac{m'+1}{2}}  
\end{displaymath}
\end{proposition}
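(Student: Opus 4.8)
The plan is to identify $\P_i(m,\sb)$ as the empirical sampling density of the combined set $\Ss_i = \bigcup_{j=1}^i \Is_{i-j}(\hat{\Ss}_j)$ and then to evaluate it explicitly using the disjoint decomposition $T_i^{(m)} = \bigcup_{j=1}^i \Is_{i-j}(\sT_j^{(m)})$ recorded just before \eqref{T vol relation}. For the Monte Carlo estimator in \eqref{dyson int diff eq scheme} to be consistent, $\P_i(m,\sb)$ must be the normalized joint density of a single draw $(m,\sb)$ from $\Ss_i$, so that $\sum_{m} \int_{T_i^{(m)}} \P_i(m,\sb)\,\dd\sb = 1$ with the sum over odd $m$ up to $\bar{M}$. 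I would therefore compute this density region by region and then verify that it collapses to the claimed $\sb$-independent constant.

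First I would establish that each operator $\Is_{i-j}$ is volume-preserving: by its definition \eqref{s tilde} it acts as a piecewise translation (adding $+(i-j)h$ to nonnegative components and $-(i-j)h$ to negative ones), so its Jacobian is $1$ and $|\Is_{i-j}(\sT_j^{(m)})| = |\sT_j^{(m)}|$. Consequently, the $\hat{\Ms}_j^{(m)}$ samples drawn uniformly on $\sT_j^{(m)}$ map to $\hat{\Ms}_j^{(m)}$ samples distributed uniformly on the image $\Is_{i-j}(\sT_j^{(m)})$. Since these images are pairwise disjoint and carry in total $\Ms_i^{(m)} = \sum_{j=1}^i \hat{\Ms}_j^{(m)}$ dimension-$m$ samples, the probability density of a draw on the piece $\Is_{i-j}(\sT_j^{(m)})$ equals $\frac{1}{\Ms_i}\cdot \frac{\hat{\Ms}_j^{(m)}}{|\sT_j^{(m)}|}$, where $\Ms_i = \sum_{m}\Ms_i^{(m)}$. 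The decisive step is to substitute the sample allocation \eqref{P s* in T*}, which yields $\hat{\Ms}_j^{(m)}/|\sT_j^{(m)}| = (\hat{\Ms}_1^{(1)}/\hat{\lambda})\, m!!\,\mathcal{B}^{(m+1)/2}$, \emph{independent of} $j$. Hence the piecewise-constant density is in fact constant across all of $T_i^{(m)}$, of the asserted form $\P_i(m,\sb) = \lambda_i^{-1} m!!\,\mathcal{B}^{(m+1)/2}$ with $\lambda_i = \Ms_i\hat{\lambda}/\hat{\Ms}_1^{(1)}$.

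It then remains to evaluate $\lambda_i$. Using $\sum_{j=1}^i |\sT_j^{(m)}| = |T_i^{(m)}| = (2t_i)^m/m!$ from \eqref{T vol relation} and \eqref{T star vol}, I obtain $\Ms_i^{(m)} = (\hat{\Ms}_1^{(1)}/\hat{\lambda})\,m!!\,\mathcal{B}^{(m+1)/2}\,(2t_i)^m/m!$; applying the identity $m!!/m! = 1/(m-1)!!$ for odd $m$ and summing over $m$ gives $\Ms_i = (\hat{\Ms}_1^{(1)}/\hat{\lambda})\sum_{m}\frac{(2t_i)^m}{(m-1)!!}\mathcal{B}^{(m+1)/2}$, from which $\lambda_i$ reduces exactly to the stated expression. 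The main obstacle, and really the only nontrivial point, is recognizing that global (not merely piecewise) uniformity over $T_i^{(m)}$ is not automatic: it is forced by the particular choice of sample counts in \eqref{P s* in T*} together with the measure-preservation of $\Is_{i-j}$, and without this matching the importance weights $1/\P_i$ in \eqref{dyson int diff eq scheme} would vary across the reused regions. Everything else is bookkeeping with double factorials and the volume formula for the ordered simplices $T_i^{(m)}$.
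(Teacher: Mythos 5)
Your proposal is correct and follows essentially the same route as the paper's own proof: both arguments rest on the disjoint decomposition $T_i^{(m)} = \bigcup_{j=1}^i \Is_{i-j}(\sT_j^{(m)})$ together with the volume-proportional sample allocation \eqref{P s* in T*}, which forces the combined sample set to be uniform on each $T_i^{(m)}$ with per-volume weight $m!!\,\mathcal{B}^{(m+1)/2}$, followed by normalization over odd $m$. Your piecewise-density computation (including the explicit Jacobian-$1$ observation for $\Is_{i-j}$ and the identification $\lambda_i = \Ms_i \hat{\lambda}/\hat{\Ms}_1^{(1)}$) is simply a more explicit rendering of the paper's statement that samples allocated proportionally to volume on disjoint pieces may be regarded as uniform draws from $U[T_i^{(m)}]$.
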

\begin{proof}
For any time sequence $\sb$ which is obtain by either reuse or newly sampling in each step, we have $\P(\sb \in T^{(m)}_i) \propto \Ms^{(m)}_{i}$ where $\Ms^{(m)}_{i}$ is the number of time sequences with $m$ components in $i$th step. According to our sampling strategy,
\begin{align*}
\Ms^{(m)}_{i} =  \sum_{j = 1}^{i} {\hat{\Ms}}^{(m)}_{j}  = \frac{\hat{\Ms}_0}{\hat{\lambda}} \cdot  m!! \mathcal{B}^{\frac{m+1}{2}} \cdot  \sum_{j=1}^i |\sT^{(m)}_j| =   \frac{\hat{\Ms}_0}{\hat{\lambda}} \cdot m!! \mathcal{B}^{\frac{m+1}{2}} \cdot |T^{(m)}_i| 
\end{align*}
where we have used the relation \eqref{T vol relation} for the last inequality. Note that the time sequences $\Ss_i$ used in scheme \eqref{dyson int diff eq scheme} are constructed the samples drawn from the pairwise disjoint $U[\Is_{i-j}(\sT^{(m)}_{j})]$, and the number of these samples locating in each $\Is_{i-j}(\sT^{(m)}_{j})$ is proportional to the volume $ |\Is_{i-j}(\sT^{(m)}_{j})|$ according to \eqref{P s* in T*}. Therefore, any time sequence in $\Ss_i$ can be considered as a sample drawn in $U[T^{(m)}_i]$ and thus we reach the conclusion \eqref{P} by 
\begin{align*}
\P_i(m,\sb) = & \ \P(\sb \in T^{(m)}_i | \sb \in T_i)  \cdot \frac{1}{|T^{(m)}_i|}\\
 = & \ \frac{\P(\sb \in T^{(m)}_i)}{\sum_{\substack{m'=1 \\ m' \text{~is odd}}}^{\bar{M}}\P(\sb \in T^{(m')}_i)}\cdot \frac{1}{|T^{(m)}_i|} = \frac{1}{\sum_{\substack{m'=1 \\ m' \text{~is odd}}}^{\bar{M}} |T^{(m')}_i| \cdot m'!! \mathcal{B}^{\frac{m'+1}{2}}} \cdot m!! \mathcal{B}^{\frac{m+1}{2}}.\qedhere
\end{align*}
\end{proof}

\subsection{Implementation of Algorithm \ref{algo:dyson} with low memory cost}
In general, the reuse of bath calculations described in Algorithm \ref{algo:dyson} requires storing of all time sequences (Line 5) as well as bath influence functionals (Line 6) in a simulation, which will lead to a high memory cost when the number of samples is large. However for Dyson series, the linearity of its governing equation \eqref{dyson int diff eq 2} allows us to implement the reuse algorithm at a much lower memory cost. To begin with, we apply the scheme \eqref{dyson int diff eq scheme} recursively and get the following explicit formula for any $G_i$:
\begin{equation}\label{Gi formula}
 G_i = \tilde{\alpha}^i O_s + \frac{1}{2}h  \sum_{k = 1}^{i-1} \tilde{\alpha}^{i-k-1} (\alpha + \tilde{\alpha})(\beta_k + \beta_k^\dagger) + \frac{1}{2}h(\beta_i + \beta_i^\dagger)
\end{equation}
where the operator $\alpha = 1 + \ii h [H_s,\cdot]$ and $\tilde{\alpha} = \frac{1}{2}(1+\alpha^2)$. $\beta_i$ is the average of Monte Carlo samples $\sb = (s_1,\cdots,s_m)$ 
\begin{displaymath}
 \beta_i = \frac{1}{{\Ms}_{i}}\sum^{{\Ms}_{i}}_{j=1}   \gamma_i(\sb_{i}^{(j)})\cdot  \mathcal{U}^{(0)}(-t_i,\sb_{i}^{(j)},t_i )\Ls_b(\sb_{i}^{(j)},t_i) \quad \text{with}  \quad \gamma_i(\sb) = \frac{1}{ \P_{i}( m_,\sb ) } \cdot  \ii^{m+1 } \cdot  (-1)^{\#\{\sb < 0\}}.
\end{displaymath}

Note that the direct evaluation of $G_i$ by \eqref{Gi formula} requires the storage of all reusable $\Ls_b$. To avoid this, we consider the following resummation of $\beta_i$ according to where the samples are originally generated:
\begin{equation}\label{beta}
 \beta_i =  \theta_{i1} + \theta_{i2} + \cdots \theta_{ii} 
\end{equation}
where the partial sum
\begin{displaymath}
   \theta_{ik} := \frac{1}{{\Ms}_{i}}\sum^{{\hat{\Ms}}_{k}}_{j=1}  \gamma_i\left(\Is_{i-k}(\hat{\sb}_{k}^{(j)})\right)\cdot  \mathcal{U}^{(0)}(-t_i,\Is_{i-k}(\hat{\sb}_{k}^{(j)}),t_i )\Ls_b(\Is_{i-k}(\hat{\sb}_{k}^{(j)}),t_i)
\end{displaymath}
stands for the part of calculations where the samples are shifted or stretched from $k$th step. Note that any $k$th column of 
\begin{equation}\label{partial sum}
    \begin{split}
    &\theta_{11},\\
    &\theta_{21},\theta_{22},\\
    & \cdots, \cdots, \cdots,\\
    & \theta_{i1},\theta_{i2},\cdots,\cdots,\theta_{ii},\\
    & \cdots,\cdots, \cdots ,\cdots,\cdots,\\
    &  \theta_{N1},\theta_{N2},\cdots,\cdots,\cdots,\theta_{NN}.
    \end{split}
\end{equation}
share the same bath influence functionals with the value $\{\Ls_b(\hat{\sb}_{k}^{(j)},t_k)\}_{j=1}^{\hat{\Ms}_k}$ according to the Proposition \ref{prop: L}. Therefore, once we have computed one $\Ls_b(\hat{\sb}_{k}^{(j)},t_k)$, it is added to all $\theta_{ik}$ for $i = k,\cdots,N$ and then can be discarded and thus we need to only store one single bath influence functional to obtain all $\theta_{kk'}$ for $1 \le k' \le k \le N$. In the end, the total memory cost for computing $G_i$ for $i=1,\cdots,N$ will only be the storage of these $\theta_{kk'}$, which are essentially $(N+1)N/2$ two-by-two matrices.

\subsection{Analysis on computational cost}
   \label{sec:comp cost dqmc}
To conclude the discussion on the summation of Dyson series, we examine the computational cost that is saved by reusing the bath influence functionals. Specifically, we consider the ratio $1- \#\{\ssb^{(m)}\}/\#\{\sb^{(m)}\}$ for various $m$ where  
\begin{align*}
    \#\{\ssb^{(m)}\} = & \  \hat{\mathcal{M}}_1^{(m)} + \hat{\mathcal{M}}_2^{(m)} + \cdots + \hat{\mathcal{M}}_N^{(m)}, \\ 
     \#\{\sb^{(m)}\} = & \ \mathcal{M}_1^{(m)} + \mathcal{M}_2^{(m)} + \cdots + \mathcal{M}_N^{(m)}.
\end{align*}
Here $\#\{\ssb^{(m)}\}$ denotes the number of $(m+1)$-point bath influence functionals that one needs to evaluate up to $N$th time step in our algorithm, and $\#\{\sb^{(m)}\}$ denotes the corresponding number if all bath influence functionals are to be calculated. For example in Figure \ref{fig:reuse}, we have $\#\{\ssb^{(1)}\} = 3$ as the blue and red diagrams need to be computed only once. However, without reusing the existing information, one then has to draw all the time sequences independently and compute the corresponding $\#\{\sb^{(1)}\} = 6$ (3 blue arcs, 2 red arcs and 1 green arc) bath influence functionals. Therefore, we have achieved a $50\%$ reduction of the computational cost in this example. In general, by \eqref{P s* in T*} we have        
\begin{displaymath}
 \begin{split}
\#\{\ssb^{(m)}\} = & \  \frac{\hat{\Ms}_0}{\hat{\lambda}} \cdot \left(|\sT^{(m)}_1| + | \sT^{(m)}_2| + \cdots + | \sT^{(m)}_N| \right)  \cdot m!! \mathcal{B}^{\frac{m+1}{2}} \\
= & \ \frac{\hat{\Ms}_0}{\hat{\lambda}} \cdot |T^{(m)}_N| \cdot m!! \mathcal{B}^{\frac{m+1}{2}} = \frac{\hat{\Ms}_0}{\hat{\lambda}} \cdot \frac{\sqrt{\mathcal{B}}(2\sqrt{\mathcal{B}} t_N)^{m}}{(m-1)!!}
 \end{split}
\end{displaymath}
and 
\begin{equation}\label{num s dyson}
 \begin{split}
 \#\{\sb^{(m)}\} =  & \ \frac{\hat{\Ms}_0}{\hat{\lambda}} \cdot\left(N| \sT^{(m)}_1| + (N-1)|  \sT^{(m)}_2| + \cdots + |  \sT^{(m)}_N | \right)\cdot m!! \mathcal{B}^{\frac{m+1}{2}} \\
= & \ \frac{\hat{\Ms}_0}{\hat{\lambda}} \cdot\sum_{i=1}^N |T^{(m)}_i| \cdot m!! \mathcal{B}^{\frac{m+1}{2}} = \frac{\hat{\Ms}_0}{\hat{\lambda}} \cdot\sum_{i=1}^N  \frac{\sqrt{\mathcal{B}}(2\sqrt{\mathcal{B}} t_i)^{m}}{(m-1)!!}.
 \end{split}
\end{equation}
Hence, for a given $m$, the percentage of the computational cost that one can save is given by  
\begin{equation}\label{save percent dqmc}
1 - \frac{  \#\{\ssb^{(m)}\}}{\#\{\sb^{(m)}\} } =  1 - \frac{N^m}{1^m + 2^m + \cdots + N^m}=: R^{(m)}(N) 
\end{equation}
which only relies on the number of time steps $N$. 

Below we plot the graphs of $R^{(m)}$ (dashed lines) for various $m$ up to $t=5$ with the time step length $h = 0.05$ in Figure \ref{fig:percent dqmc}, which are all monotonically increasing and thus one may benefit a higher reduction of computational cost from the bath calculation reuse for longer time simulations. In addition, the curves become lower as $m$ grows, indicating that the bath influence functionals with smaller $m$ are reused more frequently for fixed $N$. This observation can be diagrammatically understood in Figure \ref{fig:reuse}. In general, a time sequence $\sb = (s_1,\cdots,s_m)$ with larger $m$ is more likely to have one of its components falling in $(-h,h)$, so that its bath influence functional has to be newly evaluated. However, the value of $m$ usually does not go too large for the purpose of computing Dyson series. When $m=\bar{M} = 25$, one can still expect an around $77\%$ reduction in bath computations at $t=5$. As time further evolves, we may apply the Faulhaber's formula \cite{knuth1993} 
\begin{equation}\label{faulhaber}
    1^m + 2^m + \cdots + N^m \sim \frac{N^{m+1}}{m+1} \text{~as~} N \rightarrow + \infty
    \end{equation}
to get the asymptotic behavior of $R^{(m)}$:
\begin{equation}\label{dyson R asymptotic}
 R^{(m)}(N)  \sim  1 - \frac{m+1}{N}.
\end{equation}

Below we will take into account all choices of $m$ and estimate the overall reduction of the computational cost. Let $\mathcal{T}^{(m)}$ denote the average wall clock time for the evaluation of $\Ls_b(s_1,\cdots,s_m,t)$, the overall savings of the computational time spent on bath computations is then estimated as 
\begin{equation}
    \label{RT}
     R_{\mathrm{T}}(N,h,\mathcal{B}) =  1 - \frac{\sum^{\bar{M}}_{\substack{m = 1 \\ m \text{~is odd}}}  \#\{\ssb^{(m)}\} \cdot \mathcal{T}^{(m)} }{\sum^{\bar{M}}_{\substack{m = 1 \\ m \text{~is odd}}}  \#\{\sb^{(m)}\} \cdot \mathcal{T}^{(m)}}.
\end{equation}
In our implementation, the bath influence functional is computed using a recently proposed fast algorithm based on the inclusion-exclusion principle  \cite[Section 2]{Yang2021}, whose computational complexity is $O(2^m)$. Thereby, asymptotically we have  
\begin{displaymath}
  R_{\mathrm{T}} \sim R^{\text{asy}}_{\mathrm{T}}:= 1 - \frac{\sum^{\bar{M}}_{\substack{m = 1 \\ m \text{~is odd}}}  \#\{\ssb^{(m)}\} \cdot 2^{m}}{\sum^{\bar{M}}_{\substack{m = 1 \\ m \text{~is odd}}}  \#\{\sb^{(m)}\} \cdot 2^{m}} =   1- \frac{ \sum_{m=1}^{\frac{\bar{M}+1}{2}} \frac{(4\sqrt{\mathcal{B}} t_N)^{2m-1}}{(2m-2)!!} }{\sum_{m=1}^{\frac{\bar{M}+1}{2}} \sum_{i=1}^N \frac{(4\sqrt{\mathcal{B}} t_i)^{2m-1}}{(2m-2)!!}}  
\end{displaymath}
where $\mathcal{B}$ is the parameter describing the amplitude of two-point correlation. For large $N$, this can be approximated by
\begin{displaymath}
 R^{\text{asy}}_{\mathrm{T}} \sim  1 - \frac{\bar{M}+1}{N},
\end{displaymath}
which agrees with Figure \ref{fig:percent dqmc} where $R^{\text{asy}}_{\mathrm{T}}$ (solid lines) converges to $R^{(\bar{M})}$ as $t$ grows. This behavior is due to the fact that more bath influence functionals with $m = \bar{M}$ are sampled when $t$ gets larger, and the cost for the evaluation of these $(\bar{M}+1)$-point functionals becomes dominant. For the same reason, the graph of $R^{\text{asy}}_{\mathrm{T}}$ becomes closer to $R^{(\bar{M})}$ as $\mathcal{B}$ increases.

\begin{figure}[ht]
  \includegraphics[scale=0.5]{}
  \caption{Graphs of $R^{(m)}$ and $R^{\text{asy}}_{\mathrm{T}}$ for Dyson series (left: $\bar{M}=13$, right: $\bar{M} = 25$).}
   \label{fig:percent dqmc}
\end{figure}

\section{Fast implementation of inchworm Monte Carlo method}
   \label{sec:inchworm reuse}
The idea of the fast algorithm for summing Dyson series can also be applied to the inchworm Monte Carlo method introduced in \cite{Cai2020}, which computes the two-variable \emph{full propagator} $G(\si, \sf)$, which generalizes $G(-t,t)$ defined in \eqref{G -t t} to any initial time point $\si \in[-t,t]\backslash \{0\}$ and final time point $\sf \in [\si,t]\backslash \{0\}$. Similar to \eqref{dyson int diff eq 2}, the inchworm method can also be formulated as an integro-differential equation with bath influence functionals of any time series between $\si$ and $\sf$ inside the integral. This structure again allows us to reuse the bath influence functionals computed in previous time steps. Below we will review the formulas of the inchworm Monte Carlo method before introducing our numerical method.

\subsection{Introduction to inchworm Monte Carlo method}

\subsubsection{Full propagator}
The full propagator $G(\si, \sf)$ is formulated by 
\begin{equation} \label{eq:G} 
G(\si, \sf) = G_s^{(0)}(\si,\sf) + \displaystyle \sum_{\substack{m=2\\ m \text{~is even}} }^{+\infty}
  \ii^m \int_{\si \le \sb \le \sf} \dd \sb  (-1)^{\#\{\sb < 0\}}  \mathcal{U}^{(0)}(\si,\sb, \sf) \cdot
    \mathcal{L}_b(\sb) ,
\end{equation}
where $G_s^{(0)}(\si,\sf)$ is given in \eqref{def:Gs}. When $\si = \sf$, it is defined as $G(\si, \sf) = \text{Id}$.
Note that this definition is consistent with the Dyson series \eqref{G -t t} if we set $\si = -t$ and $\sf = t$.
 The following properties of $G(\cdot,\cdot)$ will be found useful later in the numerical method: 
 \begin{proposition}
 \
  \begin{itemize}
      \item \textbf{Shift invariance:} For any $\Delta t \geq 0$, if $\sf + \Delta t < 0$ or $\si  \geq 0$, we have  \begin{equation}\label{shift invariance}
 G(\si+\Delta t,\sf+ \Delta t) = G(\si,\sf).
\end{equation}
\item \textbf{Conjugate symmetry:} For any $-t \le \si \le \sf < t$, we have 
\begin{equation}\label{conjugate symmetry}
    G(-\sf,-\si) = G(\si,\sf)^\dagger
\end{equation}
\item \textbf{Jump condition:} $G(\cdot,\cdot)$ is discontinuous on the line segments
$[-t,0] \times \{0\}$ and $\{0\} \times [-t,0]$ and 
\begin{equation}  \label{jump condition}
    \begin{split}
&\lim_{\sf \rightarrow 0^+} G(\si,\sf) = O_s \lim_{\sf \rightarrow 0^-}G(\si,\sf);  \\
&\lim_{\si \rightarrow 0^-} G(\si,\sf) =  \lim_{\si \rightarrow 0^+}G(\si,\sf)O_s.
  \end{split}
  \end{equation}
  \end{itemize}
 \end{proposition}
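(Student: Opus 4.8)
The plan is to read off all three properties directly from the series definition \eqref{eq:G}, handling the leading term $G_s^{(0)}(\si,\sf)$ and the two integrand factors $\mc{U}^{(0)}(\si,\sb,\sf)$ and $\Ls_b(\sb)$ separately, and invoking the preceding Proposition together with the case definition \eqref{def:Gs}. In each item the argument reduces to a single change of variables inside the integral over $\{\si \le \sb \le \sf\}$, followed by a bookkeeping check that the sign factor $(-1)^{\#\{\sb<0\}}$ is preserved, after which the transformed integrand is matched termwise against the target.

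For the shift invariance I would apply the translation $\st_k = s_k + \Delta t$. Under either hypothesis ($\sf+\Delta t<0$ or $\si\geq 0$) every integration point $s_k\in[\si,\sf]$ stays on one side of the origin after translation, so: the translation is a measure-preserving bijection of $\{\si\le\sb\le\sf\}$ onto $\{\si+\Delta t\le\tsb\le\sf+\Delta t\}$; the count $\#\{\sb<0\}$ is unchanged; each $G_s^{(0)}$ factor depends only on the difference of two arguments lying on the same side of $0$, hence by \eqref{def:Gs} is translation invariant, giving $\mc{U}^{(0)}(\si+\Delta t,\tsb,\sf+\Delta t)=\mc{U}^{(0)}(\si,\sb,\sf)$ and likewise $G_s^{(0)}(\si+\Delta t,\sf+\Delta t)=G_s^{(0)}(\si,\sf)$; and each two-point correlation is invariant by \eqref{prop: B} (again because the points stay on one side), so $\Ls_b(\tsb)=\Ls_b(\sb)$. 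Collecting these, the translated integral equals the original.

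For the conjugate symmetry I would use the reflection $s'_k=-s_{m+1-k}$, exactly as in Lemma \ref{lemma:dqmc} but with the general endpoints $\si,\sf$ in place of $\mp t$. This reflection is measure-preserving and carries $\{\si\le\sb\le\sf\}$ onto $\{-\sf\le\sb'\le-\si\}$; moreover $\#\{\sb'<0\}=\#\{\sb>0\}=m-\#\{\sb<0\}$ almost everywhere, so since $m$ is even the sign factor is preserved. Applying $G_s^{(0)}(-b,-a)=G_s^{(0)}(a,b)^\dagger$ factor by factor (valid since $\si,\sf\neq 0$) and using that $W_s$ is Hermitian, the reversed product is recognized as $\mc{U}^{(0)}(-\sf,\sb',-\si)=\mc{U}^{(0)}(\si,\sb,\sf)^\dagger$, while $B(-b,-a)=\overline{B(a,b)}$ yields $\Ls_b(\sb')=\overline{\Ls_b(\sb)}$. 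As $\ii^m$ is real for even $m$, comparing with the termwise conjugate transpose of \eqref{eq:G} identifies $G(-\sf,-\si)$ with $G(\si,\sf)^\dagger$.

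The jump condition is where the genuine work lies, and I expect it to be the main obstacle, since the previous items were purely algebraic whereas here the $O_s$ insertion in the straddling case of \eqref{def:Gs} is the source of the discontinuity. For the first relation, fix $\si<0$ and let $\sf\to 0^\pm$; in $\mc{U}^{(0)}(\si,\sb,\sf)$ only the leftmost factor $G_s^{(0)}(s_m,\sf)$ depends on $\sf$, and by the $s_m<0$ computation in \eqref{def:Gs} its $\sf\to 0^+$ limit equals its $\sf\to 0^-$ limit left-multiplied by $O_s$. Because $O_s$ appears as a fixed left factor, it can be pulled out of the linear sum and integral, giving $\lim_{\sf\to 0^+}G(\si,\sf)=O_s\lim_{\sf\to 0^-}G(\si,\sf)$; the second relation follows symmetrically from the $\si$-dependence residing only in the rightmost factor $G_s^{(0)}(\si,s_1)$, which contributes a right multiplication by $O_s$, and can alternatively be recovered from the first relation via the conjugate symmetry just established. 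The delicate steps to be carried out carefully are: justifying interchange of the limit with the infinite sum and the integral through uniform (dominated) convergence of the Dyson series on the bounded time interval; showing that the contribution of the vanishing sliver where an integration point lies in $(0,\sf)$ as $\sf\to 0^+$ tends to zero; and observing that the set where some $s_k=0$ has measure zero, so the pointwise discontinuity of $G_s^{(0)}$ inside the integrand is harmless. Finally, since $O_s\neq\mathrm{Id}$ in general the two one-sided limits differ, which establishes the claimed discontinuity across the origin.
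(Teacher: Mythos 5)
Your proposal is correct and takes essentially the same route the paper itself indicates: shift invariance via the translation change of variables combined with \eqref{prop: B} and the same-side cases of \eqref{def:Gs}, conjugate symmetry via the reflection $s'_k=-s_{m+1-k}$ exactly as in the proof of Lemma~\ref{lemma:dqmc}, and the jump condition from the $O_s$ factor appearing in the straddling case of \eqref{def:Gs} inside $\mathcal{U}^{(0)}$. The extra analytic care you flag (uniform convergence of the series, the vanishing sliver of the integration domain, and the measure-zero set where some $s_k=0$) is precisely what the paper leaves implicit in declaring these proofs ``immediate,'' so there is no substantive difference in approach.
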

 
The rigorous proofs for the statements above are omitted as \eqref{shift invariance} and \eqref{jump condition} can be immediately derived by \eqref{prop: B} and the definition of $\Us^{(0)}$ respectively, and the proof of \eqref{conjugate symmetry} is identical to that of \eqref{lemma:Lb} in Lemma \ref{lemma:dqmc} by changing the variables $\sb$ in the integral to $\sb'$.

\subsubsection{Integro-differential equation for $G(\si,\sf)$}
\label{sec:inchworm int diff eq}
The full propagator has been proved to satisfy the following integro-differential equation \cite{Cai2020}:
\begin{equation}\label{eq: inchworm equation}
    \frac{\partial G(\si,\sf)}{\partial \sf} =   \sgn(\sf)  \left[ \ii H_s G(\si,\sf) +  \sum^{+ \infty}_{\substack{m=1\\ m \text{~is odd~}}} \ii^{m+1} \int_{\si \le \sb \le \sf} \dd \sb   (-1)^{\#\{\sb < 0\}}  W_s \mc{U}(\si,\sb,\sf) \Ls_b^c(\sb,\sf) \right] .
\end{equation}
Here we recall that $W_s$ is the perturbation associated with the system, and $\sgn(\cdot)$ is the sign function. $\mc{U}$ is defined similarly to $\mc{U}^{(0)}$ with the bare propagator $G^{(0)}_s(\cdot, \cdot)$ replaced by the full propagator $G(\cdot,\cdot)$:
\begin{equation}\label{U}
\mc{U}(\si, \sb, \sf) = G(s_{m}, \sf) W_s
  G(s_{m-1}, s_{m}) W_s \cdots W_s G(s_1, s_2) W_s G(\si, s_1).
\end{equation}
The definition of $\Ls_b^c$ is similar to the bath influence functional $\Ls_b$:
\begin{equation} \label{eq:L inchworm}
    \Ls_b^c(s_1,\cdots,s_{m},\sf)   =  \sum_{\mf{q} \in \mQ^c(\sb,\sf)} \prod_{(s_j,s_k) \in \mf{q}} B(s_j,s_k),
\end{equation}
but $\mQ^c$ is a subset of $\mQ$ appearing in $\Ls_b$ which only includes ``linked" pairings, which means in its diagrammatic representation any two points can be connected with each other using arcs as ``bridges". For example when $m=3$, $\Ls_b^c(s_1,s_2,s_3,\sf)$ only contains one linked diagram in \eqref{eq:all linking pair diagram example}:
\begin{equation} \label{Lbc m3}
\Ls_b^c(s_1,s_2,s_3,\sf)
= \ 
 \begin{tikzpicture}
\draw[-] (0,0)--(1.5,0);\draw plot[only marks,mark =*, mark options={color=black, scale=0.5}]coordinates {(0,0) (0.5,0) (1,0)(1.5,0)};
\draw[-] (0,0) to[bend left=75] (1,0);
\draw[-] (0.5,0) to[bend left=75] (1.5,0);
 \end{tikzpicture} = B(s_1,s_3) B(s_2,\sf).
\end{equation}
Another example for $m=5$ is given by 
\begin{equation}\label{linked pairs example}
  \begin{split}
    &\Ls_b^c(s_1,s_2,s_3,s_4,s_5,\sf)\\
  = & \ 
\begin{tikzpicture}
\draw[-] (0,0)--(2.5,0);\draw plot[only marks,mark =*, mark options={color=black, scale=0.5}]coordinates {(0,0) (0.5,0) (1,0)(1.5,0)(2,0)(2.5,0)};
\draw[-] (0,0) to[bend left=60] (1,0);
\draw[-] (0.5,0) to[bend left=60] (2,0);
\draw[-] (1.5,0) to[bend left=60] (2.5,0);
 \end{tikzpicture} 
 +
\begin{tikzpicture}
\draw[-] (0,0)--(2.5,0);\draw plot[only marks,mark =*, mark options={color=black, scale=0.5}]coordinates {(0,0) (0.5,0) (1,0)(1.5,0)(2,0)(2.5,0)};
\draw[-] (0,0) to[bend left=60] (1.5,0);
\draw[-] (0.5,0) to[bend left=60] (2,0);
\draw[-] (1,0) to[bend left=60] (2.5,0);
 \end{tikzpicture}  
 +
 \begin{tikzpicture}
\draw[-] (0,0)--(2.5,0);\draw plot[only marks,mark =*, mark options={color=black, scale=0.5}]coordinates {(0,0) (0.5,0) (1,0)(1.5,0)(2,0)(2.5,0)};
\draw[-] (0,0) to[bend left=60] (1.5,0);
\draw[-] (0.5,0) to[bend left=60] (2.5,0);
\draw[-] (1,0) to[bend left=60] (2,0);
 \end{tikzpicture} 
 +
  \begin{tikzpicture}
\draw[-] (0,0)--(2.5,0);\draw plot[only marks,mark =*, mark options={color=black, scale=0.5}]coordinates {(0,0) (0.5,0) (1,0)(1.5,0)(2,0)(2.5,0)};
\draw[-] (0,0) to[bend left=60] (2,0);
\draw[-] (0.5,0) to[bend left=60] (1.5,0);
\draw[-] (1,0) to[bend left=60] (2.5,0);
 \end{tikzpicture} \\
 := & \ B(s_1,s_3)B(s_2,s_5)B(s_4,\sf)+ B(s_1,s_4)B(s_2,s_5)B(s_3,\sf)\\
 & \hspace{60pt}  +B(s_1,s_4)B(s_2,\sf)B(s_3,s_5)+B(s_1,s_5)B(s_2,s_4)B(s_3,\sf) 
    \end{split}
\end{equation}
which does not include the unlinked terms in the bath influence functional $\Ls_b(s_1,s_2,s_3,s_4,s_5,\sf)$ such as 
\begin{equation} \label{unlinked diagrams} 
  \begin{split}
&\begin{tikzpicture}
\draw[-] (0,0)--(2.5,0);\draw plot[only marks,mark =*, mark options={color=black, scale=0.5}]coordinates {(0,0) (0.5,0) (1,0)(1.5,0)(2,0)(2.5,0)};
\draw[-,red] (0,0) to[bend left=75] (0.5,0);
\draw[-] (1,0) to[bend left=75] (2,0);
\draw[-] (1.5,0) to[bend left=75] (2.5,0);
 \end{tikzpicture}:= \redtext{B(s_1,s_2)}B(s_3,s_5)B(s_4,\sf),\\
 & \begin{tikzpicture}
\draw[-] (0,0)--(2.5,0);\draw plot[only marks,mark =*, mark options={color=black, scale=0.5}]coordinates {(0,0) (0.5,0) (1,0)(1.5,0)(2,0)(2.5,0)};
\draw[-] (0,0) to[bend left=75] (1,0);
\draw[-] (0.5,0) to[bend left=60] (2.5,0);
\draw[-,red] (1.5,0) to[bend left=75] (2,0);
 \end{tikzpicture} :=B(s_1,s_3)B(s_2,\sf)\redtext{B(s_4,s_5)} , \quad \cdots
  \end{split}
\end{equation}
where the pairs marked in red do not connect to the rest part of the diagrams via the arc bridges. Compared with the Dyson series, working with equation \eqref{eq: inchworm equation} is more advantageous as the series in the right-hand side has a faster convergence with respect to $m$. Also, $\Ls_b^c(s_1,\cdots,s_m,\sf)$ includes fewer diagrams than $\Ls_b(s_1,\cdots,s_m,t)$ in equation \eqref{dyson int diff eq 2} for the Dyson series, making its direct evaluation  cheaper than the bath influence functional for small $m$. However, asymptotically the number of diagrams in $\Ls_b^c(s_1,\cdots,s_m,\sf)$ also grows as a double factorial \cite{Stein1978b}, and its evaluation for large $m$ is even more expensive than $\Ls_b(s_1,\cdots,s_m,t)$ \cite{Yang2021}. Therefore, we again look for possible reuse of computed bath influence functionals when evolving the numerical solution.

\subsection{Numerical method}
Again, we truncate the series in the integro-differential equation up to a finite $\bar{M}$ as an approximation and apply the Runge-Kutta method 
for discretization on a uniform triangular mesh plotted in Figure \ref{fig:mesh}(a). For simplicity, we first consider the first-order forward Euler scheme:
\begin{equation}\label{inchworm RK}
 \begin{split}
  &\tilde{G}_{j,k} = \tilde{G}_{j,k-1}  + \sgn(t_{k-1} ) h \Bigg[
        \ii H_s \tilde{G}_{j,k-1}  +\\
    & \hspace{80pt}    + \sum^{\bar{M}}_{\substack{m=1\\ m \text{~is odd~}}} \ii^{m+1} \int_{t_j \le \sb \le t_{k-1}} \dd \sb   (-1)^{\#\{\sb < 0\}}  W_s \tilde{\mc{U}}_I(t_j,\sb,t_{k-1}) \Ls_b^c(\sb,t_{k-1})  \Bigg]
     \end{split}
\end{equation}
for $-N \le j < k \le N$ with $N = t_{\max}/h$. Here each $\tilde{G}_{j,k}$ is the approximation of the exact solution $G(t_j,t_k)$ and is denoted by a dot in Figure \ref{fig:mesh}(a). Since $\mc{U}$ defined in \eqref{UI} contains $G(s_k, s_{k-1})$ not on the grid points, we need to approximate $\mc{U}$ using $\tilde{\mc{U}}_I$ defined by
 \begin{equation}\label{UI}
\tilde{\Us}_I(t_j,s_1,\cdots,s_m,t_{k-1}) =  \tilde{G}_I(s_m,t_{k-1}) W_s \tilde{G}_I(s_{m-1}, s_{m}) W_s  \cdots  W_s \tilde{G}_I(s_1, s_2) W_s \tilde{G}_I(t_j, s_1)
\end{equation}
where the interpolating function $\tilde{G}_I(\cdot,\cdot)$ satisfies $\tilde{G}_I(t_\ell,t_n) = \tilde{G}_{\ell,n}$, for all $j \le \ell \le n \le k-1$, and the piecewise linear interpolation is adopted in our implementation. To ensure these $\tilde{G}_{\ell,n}$ are available before evaluating \eqref{UI}, we compute the full propagator column by column from left to right in Figure \ref{fig:mesh}(a). For each of these columns, we compute from top to bottom along the corresponding arrow. In order to better present the reuse of computed bath influence functionals, we consider the following decomposition of the domain of integration:
\begin{equation}\label{T split}
\{ \sb=(s_1,\cdots,s_m) \ | \   t_j \le s_1 \le \cdots \le s_m \le t_{k-1}\}  =   \bigcup_{p=j}^{k-2} T^{(m)}_{p,k-1} 
\end{equation}
where
\begin{equation}\label{Tpk}
 T^{(m)}_{p,k-1} = \{ (s_1, \cdots, s_m) \ |\ t_p \le s_1 \le t_{p+1}, s_1 \le s_2 \le \cdots \le s_m \le t_{k-1} \},
\end{equation}
which are pairwise disjoint for $p = j, \cdots, k-2$.
This decomposition can be visualized using the example in Figure \ref{fig:mesh}: when we use the scheme \eqref{inchworm RK} to evaluate $\tilde{G}_{-2,2}$ (node in red box in Figure \ref{fig:mesh}(a)), the domain of integration for $m=3$ is the simplex $\{(s_1, s_2, s_3) \mid -0.2 \le s_1  \le s_2 \le s_3 \le 0.1\}$ plotted in Figure \ref{fig:mesh}(b). According to the decomposition \eqref{T split}, this simplex can be split into $T^{(3)}_{0,1}$ (blue tetrahedron), $T^{(3)}_{-1,1}$ (red pentahedron) and $T^{(3)}_{-2,1}$ (green pentahedron). This decomposition allows us to reuse the bath integrand factor $\Ls_b^c(\sb,t_{k-1})$ when computing an integral in \eqref{inchworm RK} via 
\begin{equation}\label{bath reuse 1}
 \begin{split}
   & \int_{t_j \le \sb \le t_{k-1}} \dd \sb    (-1)^{\#\{\sb < 0\}}  W_s \tilde{\mc{U}}_I(t_j,\sb,t_{k-1}) \Ls_b^c(\sb,t_{k-1})\\
    & \hspace{80pt}  =    \int_{t_{j+1} \le \sb \le t_{k-1}} \dd \sb   (\text{integrand}) +  \int_{  \sb \in T^{(m)}_{j,k-1} } \dd \sb  (\text{integrand})  
      \end{split}
\end{equation}
where the value of $\Ls_b^c(\sb,t_{k-1})$ in the second integral above has been obtained when calculating $\tilde{G}_{j+1,k}$, while $\Ls_b^c(\sb,t_{k-1})$ in the last integral should be newly evaluated. This reuse of bath calculation can also be understood by the same example in Figure \ref{fig:mesh}: when evaluating $\tilde{G}_{-2,2}$, the values of $\Ls_b^c(\sb,0.1)$ for $\sb=(s_1,s_2,s_3)$ in $\{-0.1 \le s_1  \le s_2 \le s_3 \le 0.1\}$ (points in blue and red pentahedra) can be reused from $\tilde{G}_{-1,2}$ (node in blue box in Figure \ref{fig:mesh}(a)), and $\Ls_b^c(\sb,0.1)$ for $\sb \in T^{(2)}_{-2,1}$ (points in the green pentahedron) are to be calculated newly. However, we remark that such reuse does not apply to the entire integrand as the value of $\tilde{\mc{U}}_I$ replies on the $t_j$, which are different in $\tilde{G}_{-2,2}$ and $\tilde{G}_{-1,2}$.

\begin{figure}
    \centering
    \subfigure[]{

\resizebox{0.45\textwidth}{0.3\textheight}{
  
\begin{tikzpicture}
 
   \draw [->,thick] (-1,3)--(7,3);
     \draw [->,thick] (3,0)--(3,7);
        \draw [thick] (0,0)--(6,0);
   \node[right] at (7,3) {$\sf$};\node[above] at (3,7) {$\si$};
   
      \draw [dotted,thick] (0,0)--(0,3);
     \draw [dotted,thick] (1,1)--(1,3);
       \draw [dotted,thick] (2,2)--(2,3);
         \draw [dotted,thick] (4,4)--(3,4);
          \draw [dotted,thick] (5,5)--(3,5);
           \draw [dotted,thick] (6,6)--(3,6);


\draw (0,0)--(6,6);
\draw (1,0)--(3,2);
\draw (3,2)--(4,3);
\draw  (4,3)--(6,5);
\draw (2,0)--(3,1);
\draw (3,1)--(5,3);
\draw  (5,3)--(6,4);
\draw (3,0)--(6,3);
\draw (4,0)--(6,2);
\draw (5,0)--(6,1);

\draw (1,0)--(1,1)--(6,1);
\draw (2,0)--(2,2)--(6,2);
\draw (3,0)--(3,3)--(6,3);
\draw (4,0)--(4,4)--(6,4);
\draw (5,0)--(5,5)--(6,5);
\draw[thick]  (6,0)--(6,6);

 \draw[->,line width=0.7mm]  (1,1)--(1,0);  
   \draw[->,line width=0.7mm]  (2,2)--(2,0);  
    \draw[->,line width=0.7mm]  (3,3)--(3,0);  
     \draw[->,line width=0.7mm]  (4,4)--(4,0);  
      \draw[->,line width=0.7mm]  (5,5)--(5,0);  
       \draw[->,line width=0.7mm]  (6,6)--(6,0);  
   

   
 

\draw plot[only marks,mark=*, mark options={color=black, scale=1}] coordinates {(0,0) (1,1) (2,2)   (4,4)  (5,5) (6,6) };

\draw plot[only marks,mark=*, mark options={color=black, scale=1}] coordinates {(1,0) (2,0) (2,1)  (4,0) (4,1) (4,2)  (5,0) (5,1) (5,2) (5,4) (6,0) (6,1) (6,2) (6,4) (6,5) };

\draw plot[only marks,mark=*, mark options={color=black, scale=1}] coordinates {(3,0) (3,1) (3,2)(4,3) (5,3) (6,3)  (3,3) };

 \node[above] at (0,3) {$-0.3$}; 
  \node[above] at (1,3) {$-0.2$};
   \node[above] at (2,3) {$-0.1$};  
 \node[above left] at (3,3) {$0$};
   \node[left] at (3,4) {$0.1$};
  \node[left] at (3,5) {$0.2$};
    \node[left] at (3,6) {$0.3$};

   \draw [thick,red] (4.8,0.8)--(5.2,0.8)--(5.2,1.2)--(4.8,1.2)--(4.8,0.8);
    
      \draw [thick,blue](4.8,1.8)--(5.2,1.8)--(5.2,2.2)--(4.8,2.2)--(4.8,1.8);
  \end{tikzpicture}
  }

     \subfigure[]{ \includegraphics[width=.45\textwidth]{}}
\caption{An example for $h = 0.1$, $N = 3$ and $t = 0.3$ (left: mesh structure, right: decomposition of the domain of integration $\{-0.2 \le s_1 \le s_2 \le s_3 \le 0.1\}$).}
 \label{fig:mesh}
\end{figure}

At this point, we draw time sequences $\Ss_{p,k-1} := \{ \sb^{(i)}_{p,k-1}\}_{i=1}^{{\Ms}_{p,k-1}}$ from $T_{p,k-1} = \bigcup_{ \substack{m=1\\ m \text{~ is odd}}}^{\bar{M}} T^{(m)}_{p,k-1}$ and approximate the sum of integrals in \eqref{inchworm RK} using Monte Carlo method. The numerical scheme becomes
\begin{multline*}
  G_{j,k} = G_{j,k-1}  + \sgn(t_{k-1} ) h \Bigg[
        \ii H_s G_{j,k-1}  +\frac{1}{\sum_{p=j}^{k-2} {\Ms}_{p,k-1}}\sum_{p=j}^{k-2} \  \sum_{i = 1}^{{\Ms}_{p,k-1}} \frac{1}{\P_{j,k-1}(m^{(i)}_{p,k-1}, \sb^{(i)}_{p,k-1}) }  \times \\
      \times \ii^{ m^{(i)}_{p,k-1}+1}  (-1)^{\#\{\sb^{(i)}_{p,k-1} < 0\}}   W_s \Us_I(t_j,\sb^{(i)}_{p,k-1},t_{k-1})\Ls_b^c(\sb^{(i)}_{p,k-1},t_{k-1}) \Bigg]
\end{multline*}
where the function $\P_{j,k-1}(m,\sb)$ gives the probability density of $(m,\sb)$ in $\bigcup_{p=j}^{k-2}T_{p,k-1}$, and the functional $\Us_I$ is similarly defined as \eqref{UI} with all $\tilde{G}_I$ replaced by $G_I$. The reuse of bath influence functionals stated in \eqref{bath reuse 1} is also reflected in the above scheme: when evaluating $G_{j,k}$, $\Ls_b^c(\sb^{(i)}_{p,k-1},t_{k-1})$ for $p =j+1,\cdots,k-2$ have already been obtained when computing $G_{j+1,k}$, and $\Ls_b^c(\sb^{(i)}_{p,k-1},t_{k-1})$ for $p=j$ should be newly calculated. Such implementation also indicates that one should follow a proper order to evolve the scheme, which will be discussed in detail in the next section.


To achieve a higher convergence order in time, we now put Heun's method \eqref{Heun} into this framework and the corresponding inchworm Monte Carlo method reads:
\begin{multline} \label{inchworm monte carlo}
   G_{j,k}^* = G_{j,k-1}  + \sgn(t_{k-1} ) h \Bigg[
        \ii H_s G_{j,k-1}  +\frac{1}{\sum_{p=j}^{k-2} {\Ms}_{p,k-1}}\sum_{p=j}^{k-2} \  \sum_{i = 1}^{{\Ms}_{p,k-1}} \frac{1}{\P_{j,k-1}(m^{(i)}_{p,k-1}, \sb^{(i)}_{p,k-1}) }  \times \\
      \times \ii^{ m^{(i)}_{p,k-1}+1}  (-1)^{\#\{\sb^{(i)}_{p,k-1} < 0\}}   W_s \Us_I(t_j,\sb^{(i)}_{p,k-1},t_{k-1})\Ls_b^c(\sb^{(i)}_{p,k-1},t_{k-1}) \Bigg],\\
 G_{j,k} =  \frac{1}{2} (G_{j,k-1} +  G_{j,k}^*) + \frac{1}{2} \sgn(t_{k} ) h \Bigg[
        \ii H_s G_{j,k}^*  +  \frac{1}{\sum_{p=j}^{k-1} {\Ms}_{p,k}}  \sum_{p=j}^{k-1} \ \sum_{i = 1}^{{\Ms}_{p,k}} \frac{1}{\P_{j,k}(m^{(i)}_{p,k}, \sb^{(i)}_{p,k}) } \times \\
     \times \ii^{(m^{(i)}_{p,k}+1)} (-1)^{\#\{\sb^{(i)}_{p,k} < 0\}} 
      W_s \Us^*_I(t_j,\sb^{(i)}_{p,k},t_{k})  \Ls_b^c(\sb^{(i)}_{p,k},t_{k})  \Bigg]
 \end{multline} 
where $\Us^*_I$ in the second stage is given by 
\begin{displaymath}
  \Us^*_I(t_j,s_1,\cdots,s_m,t_{k+1})  =  G_I^*(t_m, s_{k+1}) W_s G_I^*(s_{m-1}, s_{m}) W_s  \cdots  W_s G_I^*(s_1, s_2) W_s G_I^*(t_j, s_1) 
\end{displaymath}
with 
\begin{equation*}
G^*_I(t_\ell,t_n) = 
\begin{cases}
  G_{\ell,n}, & \text{if } (\ell,n) \neq (j,k+1), \\
  G^*_{j,k+1}, & \text{if } (\ell,n) = (j,k+1).
\end{cases}
\end{equation*}
In general, the inchworm Monte Carlo method \eqref{inchworm monte carlo} for the integro-differential equation \eqref{eq: inchworm equation} is similar to the scheme \eqref{dyson int diff eq scheme} for Dyson series, but for the inchworm method, some special care needs to be taken at time $t=0$, which will be detailed in the next section.

\subsubsection{General procedure of the inchworm Monte Carlo method}
\label{sec:time evo inchworm}
To apply the numerical scheme \eqref{inchworm monte carlo} accurately and efficiently, we need to take the properties of $G(\cdot,\cdot)$ into consideration, which leads us to the rules below that we should follow during the implementation:
\begin{itemize}
\item[(R1)] The evolution of the numerical scheme should begin with the boundary value $G_{j,j}=\text{Id}$ for $j = -N,\dots,N$, which are denoted by the red dots in Figure \ref{fig:reuse_order}(a).

\item[(R2)] Due to the discontinuities, when $j = 0$ or $k = 0$ (blue dots in Figure \ref{fig:mesh}(a)), $G_{j,k}$ is considered to be
multiple-valued, and we use $G_{0^{\pm},k}$ and $G_{j,0^{\pm}}$ respectively to represent the approximation of the left and right limits $\lim\limits_{s \rightarrow 0^{\pm}} G(s,
t_k)$ and $\lim\limits_{s \rightarrow 0^{\pm}} G(t_j, s)$. By the jump condition \eqref{jump condition}, we have the relation
\begin{displaymath}
 G_{j,0^+} = O_s G_{j,0^-}  \text{~and~} G_{0^-,k} =  G_{0^+,k}O_s \text{~for~}  -N\le j \le -1,1 \le k \le N. 
\end{displaymath}
In particular, the boundary value on the discontinuities are given by: $G_{0^+,0^+}=G_{0^-,0^-} =\text{Id}$ and $G_{0^-,0^+} = O_s$. Consequently, the interpolation of $G_I$ appearing in the functional $\Us_I$ should satisfy 
  \begin{gather*}
  \lim_{s\rightarrow 0^{\pm}}  G_I(t_j, s) = G_{j,0^{\pm}}, \qquad
  \lim_{s\rightarrow 0^{\pm}}  G_I(s, t_k) = G_{0^{\pm},k}, \\
  \lim_{s\rightarrow 0^+}
    \lim_{\tilde{s} \rightarrow 0^+}  G_I(\tilde{s}, s) = 
  \lim_{s\rightarrow 0^-}
    \lim_{\tilde{s} \rightarrow 0^-}  G_I(s, \tilde{s}) = \text{Id}, \qquad
  \lim_{s\rightarrow 0^-}
    \lim_{\tilde{s} \rightarrow 0^+}  G_I(s, \tilde{s}) = O_s
  \end{gather*}
and the conditions for $G_I^*$ in $\Us_I^*$ are similar. This rule is indispensable in our implementation to keep the second-order convergence rate in time of the Heun's method.

\item[(R3)] We only compute $G_{j,k}$ locating in the green triangle in Figure \ref{fig:reuse_order}(a) excluding the origin. Afterwards, the value of $G_{-k,-j}$ is obtained by the conjugate symmetry \eqref{conjugate symmetry}. In addition, the following full propagators are assigned with the same values according to the shift invariance \eqref{shift invariance}:  
\begin{equation}\label{discrete shift invariance}
 \begin{split}
  & G_{-N,-N+j} = G_{-N+1,-N+j+1} = \cdots =  G_{-j-1,-1} =    G_{-j,0^-} ,  \\
  & G_{N-j,N} = G_{N-j-1,N-1} = \cdots = G_{1,j+1} = G_{0^+,j}  \text{~for~}  1 \le j \le N-1.
   \end{split}
\end{equation}
\end{itemize}

\begin{algorithm}[ht] \small
  \caption{Evolution of inchworm Monte Carlo method}\label{algo:inchworm evo}
  \begin{algorithmic}[1]
\State \textbf{input}  $\Ss_{p,k}= \{\sb^{(i)}_{p,k}\}_{i=1}^{\Ms_{p,k}} \subset T_{p,k}$ and $L_{p,k} = \{\Ls_b^c(\sb^{(i)}_{p,k},t_k)\}_{i=1}^{\Ms_{p,k}}$ for $-N \le p \le k-1$ and $0\le k \le N$
    \State Set $G_{j,j}  \gets \text{Id}$ for $j = -N,\cdots,-1,0^{-},0^+,1,\cdots,N$ and $G_{0^-,0^+} \gets O_s$  \Comment{\emph{Initial condition}}
      \medskip
  \For{$n$ from $1$ to $N$}  \Comment{\emph{Time evolution on $n$th thick segment in Figure \ref{fig:reuse_order}(a)}} 
  \medskip
  \State Compute $G_{-n,0^-}$ by \eqref{inchworm monte carlo} and set $G_{-n,0^+} \gets O_s G_{-n,0^-}$, $G_{0^\pm,n} \gets G_{-n,0^\mp}$  \Comment{\emph{Compute blue dots}}  
    \medskip
    \For{$\ell_1$ from $1$ to $n$} \Comment{\emph{Compute inside quadrant IV}}   \State Compute $G_{-n,\ell_1}$ by \eqref{inchworm monte carlo} and set $G_{-\ell_1,n} \gets (G_{-n,\ell_1})^\dagger$
    \EndFor
    \medskip
     \For{$\ell_2$ from $1$ to $N-n-1$} \Comment{\emph{Assign values in quadrant I and III}}
    \State Set $G_{-n-\ell_2,-\ell_2} \gets G_{-n,0^-}$ and $G_{\ell_2,n+\ell_2} \gets G_{0^+,n}$
    \EndFor
   \EndFor
  \medskip
  \State \textbf{return} $G_{j,k}$ for $-N\le j \le k \le N$
  \end{algorithmic}
\end{algorithm}

We are now ready to sketch the implementation of the numerical scheme \eqref{inchworm monte carlo} in Algorithm \ref{algo:inchworm evo}. For the example in Figure \ref{fig:reuse_order}(a), the computation of full propagators should be first carried out on the thick red segment, followed by the blue thick segment and finally the black one. Such order of calculation is to guarantee that the values of shorter propagation $G_{n,\ell}$ for $j \le  n \le  \ell \le k$ are available before computing $G_{j,k}$ as argued previously. On each of these segments, we only evaluate $G_{j,k}$ in the green triangle from left to right, and the rest dots on the same segment can be directly assigned according to (R3). With the evolution of numerical scheme clarified, we now focus on the efficient construction of the input $\Ss_{p,k}$ and $L_{p,k}$, which will be specified in the next section.

\subsubsection{Time sequence sampling and reuse of bath calculation}
\label{sec:sampling inchworm}
Due to the invariance and symmetry of the full propagators, one only needs to evaluate $G_{j,k}$ in the green triangular area in Figure \ref{fig:reuse_order}(a) where the index $(j,k)$ satisfies $0 \le k \le -j$ and $-N \le j \le -1$. Since computing $G_{j,k}$ requires samples in $T_{p,k-1}$ for $j \le p \le k-2$ and $T_{p,k}$ for $j \le p \le k-1$ (see \eqref{inchworm monte carlo}), we need to prepare time sequences $\Ss_{p,k}$ and calculate $L_{p,k}$ for $-N \le p \le k-1$ and $0\le k \le N$ according to the splitting \eqref{T split}. These indices $(p,k)$ are denoted by the green nodes (both $``\times"$ and $``\bullet"$) in Figure \ref{fig:reuse_order}(b). In fact, we can focus only on those $``\bullet"$ nodes
since the set of samples $\Ss_{p,k}$ for $(p,k)$ on $``\times"$ nodes can be obtained by shifting the samples in $\Ss_{p-k,0}$:
\begin{displaymath}
 \sb^{(i)}_{p,k} =  \sb^{(i)}_{p-k,0} + t_{k} \, \b1  \text{~for~} 0 \le p < k \le N
\end{displaymath}
where $\b1$ is the row vector with all its components being $1$. We can then use \eqref{prop: B conj} to directly assign the bath value for these $``\times"$ nodes: 
\begin{displaymath}
 \Ls_b^c( \sb^{(i)}_{p,k},t_k ) = \overline{\Ls_b^c( \sb^{(i)}_{p-k,0},0 ) }. 
\end{displaymath}
Note that this cannot be applied to the green $``\bullet"$ nodes as the corresponding $(\sb^{(i)}_{p,k},t_k)$ contains both positive and negative entries and thus the condition of \eqref{prop: B conj} cannot be satisfied.


\begin{figure}[h]
\centering

\def\h{0.8}

\begin{tikzpicture}
 
   \draw [->,thick] ({-10*\h},0)--({-2*\h},0);
     \draw [->,thick] ({-6*\h},{-3*\h})--({-6*\h},{4*\h});
        \draw [thick] ({-9*\h},{-3*\h})--({-3*\h},{-3*\h});
   \node[right] at ({-2*\h},0) {$k$};\node[above] at ({-6*\h},{4*\h}) {$j$};
   \node[below] at ({-6*\h},{-4*\h}) {(a)};
   
      \draw [dotted,thick] ({-9*\h},{-3*\h})--({-9*\h},0);
     \draw [dotted,thick] ({-8*\h},{-2*\h})--({-8*\h},0);
       \draw [dotted,thick] ({-7*\h},{-\h})--({-7*\h},0);
         \draw [dotted,thick] ({-5*\h},{\h})--({-6*\h},{\h});
          \draw [dotted,thick] ({-4*\h},{2*\h})--({-6*\h},{2*\h});
           \draw [dotted,thick] ({-3*\h},{3*\h})--({-6*\h},{3*\h});

     \fill [green,opacity=.5] ({-6*\h},{-3*\h})--({-3*\h},{-3*\h})--({-6*\h},0);

\draw ({-9*\h},{-3*\h})--({-3*\h},{3*\h});
\draw [-,red,line width = 1mm] ({-8*\h},{-3*\h})--({-6*\h},{-\h});
\draw [-,red,line width = 1mm] ({-5*\h},0)--({-3*\h},{2*\h});
\draw [-,blue,line width = 1mm] ({-7*\h},{-3*\h})--({-6*\h},{-2*\h});
\draw [-,blue,line width = 1mm] ({-4*\h},0)--({-3*\h},{\h});

\draw[dotted,thick]  ({-8*\h},{-3*\h})--({-8*\h},{-2*\h})--({-3*\h},{-2*\h});
\draw[dotted,thick]  ({-7*\h},{-3*\h})--({-7*\h},{-\h})--({-3*\h},{-\h});
\draw[dotted,thick]  ({-5*\h},{-3*\h})--({-5*\h},{\h})--({-3*\h},{\h});
\draw[dotted,thick]  ({-4*\h},{-3*\h})--({-4*\h},{2*\h})--({-3*\h},{2*\h});
\draw ({-3*\h},{-3*\h})--({-3*\h},{3*\h});

   \draw [-,red,line width = 1mm] ({-6*\h},{-\h})--({-5*\h},{-\h})--({-5*\h},0);
     \draw [-,blue,line width = 1mm] ({-6*\h},{-2*\h})--({-4*\h},{-2*\h})--({-4*\h},0);
       \draw [-,black,line width = 1mm] ({-6*\h},{-3*\h})--({-3*\h},{-3*\h})--({-3*\h},0);

   
 

\draw plot[only marks,mark=*, mark options={color=red, scale=1}] coordinates {({-9*\h},{-3*\h}) ({-8*\h},{-2*\h}) ({-7*\h},{-\h})   ({-5*\h},{\h})  ({-4*\h},{2*\h}) ({-3*\h},{3*\h}) };

\draw plot[only marks,mark=*, mark options={color=black, scale=1}] coordinates {({-8*\h},{-3*\h}) ({-7*\h},{-3*\h})({-5*\h},{-3*\h})({-4*\h},{-3*\h})({-3*\h},{-3*\h})({-7*\h},{-2*\h})({-5*\h},{-2*\h})({-4*\h},{-2*\h})({-3*\h},{-2*\h}) ({-5*\h},{-\h})({-4*\h},{-\h})({-3*\h},{-\h}) ({-4*\h},{\h})({-3*\h},{\h})({-3*\h},{2*\h}) };

\draw plot[only marks,mark=*, mark options={color=blue, scale=1}] coordinates {({-6*\h},{-3*\h}) ({-6*\h},{-2*\h}) ({-6*\h},{-\h})({-6*\h},0) ({-5*\h},0) ({-4*\h},0)  ({-3*\h},0) };

 \node[above] at ({-9*\h},0) {\small $-3$}; 
  \node[above] at ({-8*\h},0) {\small $-2$};
   \node[above] at ({-7*\h},0) {\small $-1$};  
 \node[above left] at ({-6*\h},0) {\small $0$};
   \node[left] at ({-6*\h},{\h}) {\small $1$};
  \node[left] at ({-6*\h},{2*\h}) {\small $2$};
    \node[left] at ({-6*\h},{3*\h}) {\small $3$};


    \draw [->,thick] ({\h},0)--({9*\h},0);
     \draw [->,thick] ({5*\h},{-3*\h})--({5*\h},{4*\h});
        \draw [thick] ({2*\h},{-3*\h})--({8*\h},{-3*\h});
         \draw [thick] ({2*\h},{-3*\h})--({8*\h},{3*\h});
            \draw [thick] ({8*\h},{-3*\h})--({8*\h},{3*\h});
    \draw [dotted,thick] ({3*\h},{-3*\h})--({3*\h},{-2*\h})--({8*\h},{-2*\h});
    \draw [dotted,thick] ({4*\h},{-3*\h})--({4*\h},{-\h})--({8*\h},{-\h});
    \draw [dotted,thick]({6*\h},{-3*\h})--({6*\h},{\h})--({8*\h},{\h});
    \draw [dotted,thick]({7*\h},{-3*\h})--({7*\h},{2*\h})--({8*\h},{2*\h});
        
   \node[right] at ({9*\h},0) {$k$};\node[above] at ({5*\h},{4*\h}) {$p$};
   
      \draw [dotted,thick] ({2*\h},{-3*\h})--({2*\h},0);
     \draw [dotted,thick] ({3*\h},{-2*\h})--({3*\h},0);
       \draw [dotted,thick] ({4*\h},{-\h})--({4*\h},0);
         \draw [dotted,thick] ({6*\h},{\h})--({5*\h},{\h});
          \draw [dotted,thick] ({7*\h},{2*\h})--({5*\h},{2*\h});
           \draw [dotted,thick] ({8*\h},{3*\h})--({5*\h},{3*\h});

      \draw [->,blue,line width = 0.75mm] ({5*\h},-2*\h)--({6*\h},{-3*\h});
            \draw [->,red,line width = 0.75mm] ({5*\h},-\h)--({7*\h},{-3*\h});
            
             \draw [->,blue,line width = 0.75mm] ({6*\h},-\h)--({8*\h},{-3*\h});
                    \draw [->,blue,line width = 0.75mm] ({7*\h},-\h)--({8*\h},{-2*\h});

   \draw plot[only marks,mark=*, mark options={color=darkgreen, scale=1}] coordinates {({5*\h},{-3*\h}) ({5*\h},{-2*\h}) ({5*\h},{-\h}) ({6*\h},{-3*\h}) ({6*\h},{-2*\h})({6*\h},{-\h})  ({7*\h},{-3*\h}) ({7*\h},{-2*\h}) ({7*\h},{-\h})  ({8*\h},{-3*\h})  ({8*\h},{-2*\h}) ({8*\h},{-\h})    };

      \draw plot[only marks,mark=x, mark options={color=darkgreen, scale=1.5}] coordinates { ({6*\h},0)  ({7*\h},0) ({7*\h},{\h})  ({8*\h},{2*\h})  ({8*\h},{\h}) ({8*\h},0)    };

    \node[above] at ({2*\h},0) {\small $-3$}; 
  \node[above] at ({3*\h},0) {\small $-2$};
   \node[above] at ({4*\h},0) {\small $-1$};  
 \node[above left] at ({5*\h},0) {\small $0$};
   \node[left] at ({5*\h},{\h}) {\small $1$};
  \node[left] at ({5*\h},{2*\h}) {\small $2$};
    \node[left] at ({5*\h},{3*\h}) {\small $3$};
      \node[below] at ({5*\h},{-4*\h}) {(b)};

  \end{tikzpicture}

  \caption{An example for $N = 3$ (left: time evolution of inchworm Monte Carlo method, right: sampling and reuse strategy).}
 \label{fig:reuse_order}
\end{figure}

From here, we will only work with $\Ss_{p,k}$ with $``\bullet"$ indices and consider the reuse of bath influence functionals for inchworm Monte Carlo method. Since the full propagator is now defined in the two-dimensional half-space, we have multiple paths  following which the bath influence functionals can be reused and each of these paths is represented by an arrow in the quadrant IV of Figure \ref{fig:reuse_order}(b). For example, the red arrow denotes the reuse of $\Ls_b^c(s_1,\cdots,s_m,t_k)$ for calculating $G_{-1,0} \rightarrow G_{-2,1} \rightarrow G_{-3,2}$, which is illustrated in Figure \ref{fig:reuse inchworm}. When $m=3$, each $\Ls_b^c(s_1,s_2,s_3,t_k)$ is represented by a linked diagram as defined in \eqref{Lbc m3}, where the time sequence $(s_1,s_2,s_3)$ denoted by the three dots is a sample in $T^{(3)}_{p,k}$. Note that in each diagram, the leftmost dot marked in red should always be restricted within $[t_p,t_{p+1}]$ by the definition of $T^{(m)}_{p,k}$. One can easily see that Proposition \ref{prop: L} also applies to $\Ls_b^c$. Hence, by the stretching invariance, the diagrams with the same color have the same functional value. In addition, we remark that the shifting invariance such as the reuse of red arc in Figure \ref{fig:reuse} is no longer considered now since the left end $t_p$ and right end $t_k$ satisfy $t_p < 0 \le t_k$ for all $``\bullet"$ nodes in Figure \ref{fig:reuse_order}(b). 

\begin{figure}[h]

\begin{tikzpicture}




\draw[thick] (-2,0)--(0,0);
\draw[thick] (-4,-2) -- (2,-2);
\draw[thick] (-6,-4) -- (4,-4);
\draw[thick,dashed]  (-2,0)--(-2,-4); 
\draw[thick,dashed]  (0,0)--(0,-4); 
\draw[thick,dashed]  (-4,-2)--(-4,-4); 
\draw[thick,dashed]  (2,-2)--(2,-4); 

\node[below right] at (0,0) { $0$};
\node[below left] at (-2,0) { $-h$};
\node[below left] at (-4,-2) { $-2h$};
\node[below right] at (2,-2) { $h$};
\node[below left] at (-6,-4) { $-3h$};
\node[below right] at (4,-4) { $2h$};

\draw[thick] (-2,-0.1)--(-2,0.1);
\draw[thick] (0,-0.1)--(0,0.1);
\draw[thick] (-2,-1.9)--(-2,-2.1);
\draw[thick] (0,-1.9)--(0,-2.1);
\draw[thick] (-2,-3.9)--(-2,-4.1);
\draw[thick] (0,-3.9)--(0,-4.1);
\draw[thick] (-4,-1.9)--(-4,-2.1);
\draw[thick] (2,-1.9)--(2,-2.1);
\draw[thick] (-4,-3.9)--(-4,-4.1);
\draw[thick] (2,-3.9)--(2,-4.1);
\draw[thick] (-6,-3.9)--(-6,-4.1);
\draw[thick] (4,-3.9)--(4,-4.1);

\draw plot[only marks,mark =*, mark options={color=black, scale=0.7}]coordinates {  (-1.33,0) (-1,0)   (-3.33,-2) (-3,-2)   (-5.33,-4) (-5,-4)  (-1.33,-2) (-0.67,-2)   (-3.33,-4) (-2.67,-4)  (-1.33,-4) (-0.67,-4)   };

\draw plot[only marks,mark =*, mark options={color=red, scale=0.7}]coordinates { (-1.67,0) (-3.67,-2)  (-5.67,-4)  (-2.33,-2)   (-4.33,-4)     (-4.67,-4)    };

\draw[blue] (-1.67,0) to[bend left = 60]  (-1,0);
\draw[blue] (-1.33,0) to[bend left] (0,0);
\draw[blue] (-3.67,-2) to[bend left=60]  (-3,-2);
\draw[blue] (-3.33,-2) to[bend left] (2,-2);
\draw[blue] (-5.67,-4) to[bend left=60]  (-5,-4);
\draw[blue] (-5.33,-4) to[bend left] (4,-4);
\draw[red]  (-2.33,-2)to[bend left] (-0.67,-2);
\draw[red]  (-1.33,-2)to[bend left=25] (2,-2);
\draw[red]  (-4.33,-4)to[bend left] (-2.67,-4);
\draw[red]  (-3.33,-4)to[bend left=25] (4,-4);
\draw[green]  (-4.67,-4)to[bend left]  (-0.67,-4);
\draw[green] (-1.33,-4)to[bend left=25] (4,-4);
\end{tikzpicture}

\caption{Calculation reuse of $\Ls_b^c(s_1,s_2,s_3,t_k)$ along the path $G_{-1,0} \rightarrow G_{-2,1} \rightarrow G_{-3,2}$.}

\label{fig:reuse inchworm}
\end{figure}


Similar to algorithm for Dyson series, we consider the sample space $\sT_{p,k}$ from which we draw new time sequences in each time step (e.g, the blue diagram in $(-h,0)$, the red diagram in $(-2h,h)$ and the green diagram in $(-3h,2h)$). To do this, we generalize \eqref{T hat} to two indices: 
\begin{equation}\label{Tpk hat}
{\sT}^{(m)}_{p,k}  = \left\{ (s_1,\cdots, s_m) \in T^{(m)}_{p,k} \ \big| \ -h < t_k  < h \text{~or~}  \exists \ s_i \text{~such that~} -h < s_i < h  \right\},
\end{equation} 
where $p = -N, \cdots, -1$ and $k = 0, \cdots, N$. The volume of ${\sT}^{(m)}_{p,k}$ is similarly given by the relation: 
\begin{equation}\label{Tpk star vol}
   |{\sT}^{(m)}_{p,k} | =  
   \begin{cases}
  |T^{(m)}_{p,k}| -    |T^{(m)}_{p+1,k-1}| , & \text{~if~} -N\le p \le -2, 1\le k \le N, \\
    |T^{(m)}_{p,k}|,& \text{~if~} p = -1 \text{~or~} k = 0,
  \end{cases}
\end{equation}
where $|T^{(m)}_{p,k}|$ can be calculated by the definition \eqref{Tpk}: 
\begin{equation}\label{Tpk vol}
|T^{(m)}_{p,k} | = \int_{t_p}^{t_{p+1}} \dd s_1 \int_{s_1 \le s_2 \le \cdots \le s_m} \dd s_2 \cdots \dd s_m = \frac{1}{m!}[(t_{k-p})^{m} - ( t_{k-p-1})^m].
\end{equation}

In general, our algorithm for inchworm Monte Carlo method is summarized in Algorithm \ref{algo:inchworm reuse}. Lines 2--9 build up the time sequences $\Ss_{p,k}$ and bath influence functionals $L_{p,k}$ along the arrows in Figure \ref{fig:reuse_order}(b) following the strategy similar to Algorithm \ref{algo:dyson} for the Dyson series, and Lines 10--14 construct $\Ss_{p,k}$ and $L_{p,k}$ with $``\times"$ indices. The sampling method for the input $\hat{\Ss}_{p,k}$ is also similar to that for the Dyson series: the number of samples in $\hat{\Ss}_{p,k}$ is set as 
\begin{equation}\label{hat Mpk}
 \hat{\Ms}_{p,k}^{(m)} =  \frac{\hat{\Ms}_0}{\hat{\lambda}} \cdot |\sT^{(m)}_{p,k}| \cdot m!! \mathcal{B}^{\frac{m+1}{2}}
 \end{equation}
where $\hat{\Ms}_0 = \hat{\Ms}^{(1)}_{-1,0}$ and $\hat{\lambda} = \mathcal{B}h$. Each sample is again generated according to the uniform distribution $U(\sT^{(m)}_{p,k})$. The formula of the density function $\P_{j,k}(m,\sb)$ used in the numerical scheme is then given by the following proposition: 
\begin{proposition}
For any $-N \le j < k \le N$ and $m = 1,3,\cdots,\bar{M}$, 
\begin{equation}
\P_{j,k}(m,\sb) = \frac{1}{\lambda_{j,k}}   \cdot m!! \mathcal{B}^{\frac{m+1}{2}}
\end{equation}
where 
\begin{displaymath}
\lambda_{j,k} = \sum_{\substack{m'=1 \\ m' \text{~is odd}}}^{\bar{M}}  \frac{(t_{k-j})^{m'} }{(m'-1)!!} \cdot  \mathcal{B}^{\frac{m'+1}{2}}
\end{displaymath}
\end{proposition}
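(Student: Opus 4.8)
The plan is to follow the proof of Proposition \ref{thm:P} verbatim, now with the single time index $i$ replaced by the pair $(p,k)$. The key observation is again that every sequence fed into scheme \eqref{inchworm monte carlo} — whether freshly sampled or reused — may be regarded as a uniform draw from the full $m$-simplex $\bigcup_{p=j}^{k-1} T^{(m)}_{p,k} = \{t_j \le s_1 \le \cdots \le s_m \le t_k\}$. Hence $\P_{j,k}(m,\sb)$ splits as the probability of selecting the degree $m$ times the uniform density on that simplex, and it suffices to (i) count, for each odd $m$, the total number of $m$-component sequences used in assembling $G_{j,k}$, and (ii) verify their uniformity.

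For the count I would first fix a node $(p,k)$ and sum the fresh draws along its reuse path. A $\bullet$-node accumulates its samples by successive stretches, $\Ss_{p,k} = \bigcup_{\ell\ge 0}\Is_\ell(\hat{\Ss}_{p+\ell,k-\ell})$, the path terminating at the boundary $p=-1$ or $k=0$, where \eqref{Tpk star vol} gives $\sT^{(m)}=T^{(m)}$. By \eqref{hat Mpk},
\[
\Ms^{(m)}_{p,k} = \sum_{\ell\ge 0}\hat{\Ms}^{(m)}_{p+\ell,k-\ell}
= \frac{\hat{\Ms}_0}{\hat{\lambda}}\, m!!\,\mathcal{B}^{\frac{m+1}{2}}\sum_{\ell\ge 0}|\sT^{(m)}_{p+\ell,k-\ell}|,
\]
and the two cases of \eqref{Tpk star vol} telescope along the path to yield $\sum_{\ell\ge 0}|\sT^{(m)}_{p+\ell,k-\ell}| = |T^{(m)}_{p,k}|$, the two-index analogue of \eqref{T vol relation}. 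For the $\times$-nodes ($0\le p<k$) the samples are shifts by $t_k\b1$ of those at $(p-k,0)$ and the functionals come from \eqref{prop: B conj}; since the shift preserves both volume and count while $|T^{(m)}_{p,k}|$ depends only on $k-p$ by \eqref{Tpk vol}, the identity $\Ms^{(m)}_{p,k} = \frac{\hat{\Ms}_0}{\hat{\lambda}}\,m!!\,\mathcal{B}^{\frac{m+1}{2}}|T^{(m)}_{p,k}|$ holds at every node with $p<k$. Summing over $p=j,\dots,k-1$, using the decomposition \eqref{T split} and the telescoping of \eqref{Tpk vol}, then gives
\[
\sum_{p=j}^{k-1}\Ms^{(m)}_{p,k} = \frac{\hat{\Ms}_0}{\hat{\lambda}}\, m!!\,\mathcal{B}^{\frac{m+1}{2}}\cdot\frac{(t_{k-j})^m}{m!}.
\]

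To establish uniformity I would note that each $\Is_\ell$ and the shift $+\,t_k\b1$ is a piecewise translation, hence Lebesgue-measure preserving, while by \eqref{hat Mpk} the number of fresh draws in each $\sT^{(m)}_{p+\ell,k-\ell}$ is proportional to its volume. The reused samples therefore tile $T^{(m)}_{p,k}$ with local density proportional to volume, so they are uniform on $T^{(m)}_{p,k}$; assembling the pairwise disjoint pieces for $p=j,\dots,k-1$ produces a uniform law on the full simplex. Combining the two steps and using $m!!/m! = 1/(m-1)!!$,
\[
\P_{j,k}(m,\sb) = \frac{\sum_{p=j}^{k-1}\Ms^{(m)}_{p,k}}{\sum_{m'}\sum_{p=j}^{k-1}\Ms^{(m')}_{p,k}}\cdot\frac{m!}{(t_{k-j})^m}
= \frac{m!!\,\mathcal{B}^{\frac{m+1}{2}}}{\sum_{m'}\frac{(t_{k-j})^{m'}}{(m'-1)!!}\,\mathcal{B}^{\frac{m'+1}{2}}},
\]
where $m'$ runs over the odd integers up to $\bar{M}$; this is precisely the asserted formula with the stated $\lambda_{j,k}$.

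The main obstacle is bookkeeping rather than analysis: one must correctly read off the two-dimensional reuse paths in Figure \ref{fig:reuse_order}(b), handle the two boundary cases of \eqref{Tpk star vol} so that the path telescoping closes exactly, and confirm that incorporating the $\times$-nodes through the shift-and-conjugate reuse \eqref{prop: B conj} disturbs neither the per-degree counts nor the uniformity. Once these combinatorial points are settled, the remaining algebra is identical to the Dyson case of Proposition \ref{thm:P}.
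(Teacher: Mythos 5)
Your proposal is correct and takes exactly the route the paper intends: the paper omits this proof precisely because it is ``almost identical to that of Proposition \ref{thm:P},'' and your adaptation — telescoping $|\sT^{(m)}_{p+\ell,k-\ell}|$ along each stretch path via \eqref{Tpk star vol} to get $\Ms^{(m)}_{p,k} \propto |T^{(m)}_{p,k}|$, handling the shifted ``$\times$'' nodes through the $k-p$ dependence of \eqref{Tpk vol}, and then summing over $p$ with the decomposition \eqref{T split} — is the faithful two-index version of that argument. The uniformity and counting steps, and the final algebra using $m!!/m! = 1/(m-1)!!$, all check out.
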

The proof of this proposition is almost identical to that of Proposition \ref{thm:P} and thus omitted.

\begin{algorithm}[ht] \small
  \caption{Efficient implementation of inchworm Monte Carlo method}\label{algo:inchworm reuse}
  \begin{algorithmic}[1]
\State \textbf{input}  $\hat{\Ss}_{p,k}= \{\sb^{(i)}_{p,k}\}_{i=1}^{\hat{\Ms}_{p,k}} \subset \sT_{p,k}$ for $-N \le p \le -1$ and $0\le k \le N$
 \medskip
  \For{$n$ from $1$ to $N$}  \Comment{\emph{Reuse on the arrows covering $(N-n+1)$ $``\bullet"$ in Figure \ref{fig:reuse_order}(b)}} 
    \For{$\ell$ from $0$ to $N-n$}
  \State Compute $\hat{L}_{-1-\ell,n+\ell} = \{\Ls_b^c(\sb^{(j)}_{-1-\ell,n+\ell},t_{n+\ell})\}_{j=1}^{\hat{\Ms}_{-1-\ell,n+\ell}}$ \Comment{\emph{Arrows starting from $p=-1$}} 
  \State Compute $\hat{L}_{-n-\ell,\ell} = \{\Ls_b^c(\sb^{(j)}_{-n-\ell,\ell},t_{\ell})\}_{j=1}^{\hat{\Ms}_{-n-\ell,\ell}}$ \Comment{\emph{Arrows starting from $k=0$}} 
\medskip 
\State Set $\Ss_{-1-\ell,n+\ell} \gets \bigcup_{j=0}^{\ell} \Is_{j} (\hat{\Ss}_{-1,n})$;\quad  $\Ss_{-n-\ell,\ell} \gets \bigcup_{j=0}^{\ell} \Is_{j} (\hat{\Ss}_{-n,0})$ \Comment{\emph{Stretch samples}} 
\medskip 
\State Set $L_{-1-\ell,n+\ell} \gets \bigcup_{j=0}^{\ell} \hat{L}_{-1,n}$;\quad  $L_{-n-\ell,\ell} \gets \bigcup_{j=0}^{\ell}\hat{L}_{-n,0}$ \Comment{\emph{Reuse $\hat{L}$ values}} 
  \EndFor
  \EndFor
  \medskip
 \For{$n$ from $1$ to $N$}\Comment{\emph{Assign values of $``\times"$}}
 \For{$\ell$ from $1$ to $N-n+1$}
 \State Set $\Ss_{-n+\ell,\ell} \gets \{  \sb^{(j)}_{-n,0} + t_{\ell} \cdot \b1^{(m_{-n,0}^{(j)})}  \}_{j=1}^{\hat{\Ms}_{-n,0}}$ and $L_{-n+\ell,\ell} \gets \hat{L}_{-n,0}$
  \EndFor
  \EndFor
      \medskip
  \State \textbf{return} $\Ss_{p,k}$ and $L_{p,k}$ for $-N \le p \le k-1$ and $0\le k \le N$
  \end{algorithmic}
\end{algorithm}

\begin{remark}
 Unlike the method based on Dyson series, low memory cost implementation for inchworm method is not available. The system factor $\mathcal{U}_I^*$ in the numerical scheme \eqref{inchworm monte carlo} now depends on the the previously computed full propagators, which prohibits the preparation of all the partial sums like we did in Section \ref{sec:comp cost dqmc}. Consequently, these sums have to be computed sequentially, and all the bath influence functionals have to be stored to gain the efficiency. One possible workaround for long-time simulations is to restrict the memory length like in the iterative QuAPI method \cite{Makri1995}. We will leave this to future works. 
\end{remark}

\subsection{Analysis on computational cost}
 \label{sec:comp cost inchworm}
We again examine the computational cost saved after reusing the bath calculations for inchworm Monte Carlo method. By \eqref{hat Mpk}, the total number of samples in $\hat{\Ss}_{p,k}$ with $``\bullet"$ indices in Figure \ref{fig:reuse_order}(b) is in general given by 
\begin{equation}\label{num hat s inchworm}
 \#\{\ssb^{(m)}\} =  \frac{\hat{\Ms}_0}{\hat{\lambda}} \cdot \left(  \sum_{n=1}^{N} \sum_{\ell = 0}^{N-n} |\sT^{(m)}_{-1-\ell,n+\ell}| + |\sT^{(m)}_{-n-\ell,\ell}| \right) \cdot m!! \mathcal{B}^{\frac{m+1}{2}}
\end{equation}
where $|\sT^{(m)}_{p,k}|$ is summed along the arrows in the quadrant IV. Applying the relation \eqref{Tpk star vol} on each arrow yields  
\begin{displaymath} 
   \begin{split}
  \#\{\ssb^{(m)}\}   = & \  \frac{\hat{\Ms}_0}{\hat{\lambda}}\cdot \left(  \sum_{n=1}^N  |T^{(m)}_{-1-N+n,N}| + |T^{(m)}_{-N,N-n}| \right) \cdot m!! \mathcal{B}^{\frac{m+1}{2}} \\
  = & \  \frac{ \hat{\Ms}_{0}\mathcal{B}^{ \frac{m+1}{2}}}{\hat{\lambda}(m-1)!!} \cdot \left[ (t_{2N})^m + (t_{2N-1})^m - (t_{N})^m -(t_{N-1})^m\right].
   \end{split}
\end{displaymath}
One the other hand, similar to \eqref{num s dyson} for Dyson series, the number of all time sequences, denoted by $\#\{\sb^{(m)}\}$, is expressed by \eqref{num hat s inchworm} with the volume $|\sT^{(m)}_{p,k}|$ replaced by $|T^{(m)}_{p,k}|$. Note that the value of $|T^{(m)}_{p,k}|$ only depends on the difference $k-p$ according to \eqref{Tpk vol}, we therefore have
\begin{displaymath} 
   \begin{split}
  \#\{\sb^{(m)}\} = & \  \frac{\hat{\Ms}_0}{\hat{\lambda}}\cdot \left(  \sum_{i=1}^{2N} \  \sum_{\substack{ -N \le p \le -1, \\ 0 \le  k \le N, \\ k-p = i }} |T^{(m)}_{p,k}| \right) \cdot m!! \mathcal{B}^{\frac{m+1}{2}} \\
  = & \  \frac{\hat{\Ms}_0}{\hat{\lambda}} \cdot \left( \sum_{j=1}^N j \cdot \frac{(t_j)^m - (t_{j-1}^m)}{m!} + \sum_{j=N+1}^{2N} (2N-j+1)\cdot \frac{(t_j)^m - (t_{j-1}^m)}{m!} \right) \cdot m!! \mathcal{B}^{\frac{m+1}{2}} \\
  = & \ \frac{ \hat{\Ms}_{0}\mathcal{B}^{\frac{m+1}{2}}}{\hat{\lambda}(m-1)!!} \cdot \left( \sum_{j=N+1}^{2N} (t_j)^m - \sum_{j=1}^{N-1} (t_j)^m  \right).
   \end{split}
\end{displaymath}
Thus, for the order-$m$ bath influence functionals, the proportion of the computational cost saved by the reuse is 
\begin{equation}\label{save percent inchworm}
   R^{(m)}(N) = 1 - \frac{ \#\{\ssb^{(m)}\}}{ \#\{\sb^{(m)}\} } =  1 - \frac{(2N)^m + (2N-1)^m - (N)^m - (N-1)^m   }{ \sum_{j=N+1}^{2N} (j)^m - \sum_{j=1}^{N-1} (j)^m  }.
\end{equation}
For large $N$, the denominator can again be estimated by Faulhaber's formula \eqref{faulhaber}:
\begin{displaymath}
 \sum_{j=N+1}^{2N} (j)^m - \sum_{j=1}^{N-1} (j)^m =  \sum_{j=1}^{2N} (j)^m - 2  \sum_{j=1}^{N-1} (j)^m - N^m \sim \frac{(2N)^{m+1} - 2(N-1)^{m+1}}{m+1} - N^m,
\end{displaymath}
yielding the following asymptotic growth of $R^{(m)}$:
\begin{equation}\label{inchworm R asymptotic}
  R^{(m)}(N) \sim  1 - \frac{1- (\frac{1}{2})^{m+1}}{1- (\frac{1}{2})^m} \cdot \frac{m+1}{N},
\end{equation}
which also converges to $1$ at the rate $O(\frac{1}{N})$. It can be seen that this asymptotic value is close to $1 - (m+1)/N$ as in \eqref{dyson R asymptotic} for the Dyson series, especially for large $m$.
In particular, one can check that \eqref{save percent dqmc} and \eqref{save percent inchworm} are equal when $m=1$. This similarity can be verified by the graphs of $R^{(m)}$ in Figure \ref{fig:percent inchworm} where the dashed lines are almost identical to those in Figure \ref{fig:percent dqmc} for Dyson series, suggesting that inchworm Monte Carlo method can benefit the same reduction in the computational cost of $\Ls_b^c(s_1,\cdots,s_m)$ after reusing the bath calculations. By further taking the computational complexity of $\Ls_b^c(s_1,\cdots,s_m)$ into consideration, which is $O(\alpha^m)$ with $\alpha \approx 2.1258$ upon applying the inclusion-exclusion principle \cite[Section 3]{Yang2021}, the overall reduction of the computational cost can again be formulated as \eqref{RT} with $\mathcal{T}^{(m)}$ denoting the average wall clock time on evaluating $\Ls_b^c(s_1,\cdots,s_m,\sf)$, which has the following asymptotic behavior:
\begin{displaymath}
  \begin{split}
 R_{\text{T}} \sim   R^{\text{asy}}_{\text{T}}  = & \  1 - \frac{\sum^{\bar{M}}_{\substack{m = 1 \\ m \text{~is odd}}}  \#\{\ssb^{(m)}\} \cdot \alpha^{m}}{\sum^{\bar{M}}_{\substack{m = 1 \\ m \text{~is odd}}}  \#\{\sb^{(m)}\} \cdot \alpha^{m}} \\
 = & \ 1 - \frac{\sum_{m=1}^{\frac{\bar{M}+1}{2}}\frac{(\sqrt{\mathcal{B}}\alpha)^{2m-1}}{(2m-2)!!} \cdot \left[ (t_{2N})^{2m-1} + (t_{2N-1})^{2m-1}  - (t_{N})^{2m-1}  -(t_{N-1})^{2m-1} \right]}{\sum_{m=1}^{\frac{\bar{M}+1}{2}}\frac{(\sqrt{\mathcal{B}}\alpha)^{2m-1}}{(2m-2)!!} \cdot \left( \sum_{j=N+1}^{2N} (t_j)^{2m-1} - \sum_{j=1}^{N-1} (t_j)^{2m-1}  \right)} \\
 \sim & \  1 - \frac{(\bar{M}+1)\left(2- (\frac{1}{2})^{\bar{M}})\right)}{2\left(1- (\frac{1}{2})^{\bar{M}})\right) } \cdot \frac{1}{N} \text{ as } N \rightarrow + \infty.
  \end{split}
\end{displaymath}
This ratio again converges to $R^{(\bar{M})}$ as shown by the solid lines in Figure \ref{fig:percent inchworm}.

\begin{figure}[ht]
  \includegraphics[scale=0.5]{}
  \caption{Graphs of $R^{(m)}$ and $R^{\text{asy}}_{\text{T}}$ for inchworm Monte Carlo method (left: $\bar{M}=13$, right: $\bar{M} = 25$).}
   \label{fig:percent inchworm}
\end{figure}

\section{Numerical experiments}
\label{sec:num exp}

In our numerical experiments, we consider the spin-boson model where the system Hamiltonian has the energy difference $\epsilon =1$ and frequency of the spin flipping $\Delta =1$. For the bath influence functional, we assume an Ohmic spectral density
\begin{displaymath}
 J(\omega) = \frac{\pi}{2} \sum_{l=1}^L \frac{c^2_{l}}{\omega_{l}} \delta (\omega - \omega_{l})
\end{displaymath}
where the number of modes is set as $L=400$. The coupling intensity $c_l$ and frequency of each harmonic oscillator $\omega_l$ above are respectively given by 
\begin{displaymath}
c_l = \omega_l \sqrt{\frac{\xi \omega_c}{L} [1 - \exp(-\omega_{\max}/\omega_c)]}, \quad \omega_l = -\omega_c
  \ln \left( 1 - \frac{l}{L} [1 - \exp(-\omega_{\max} / \omega_c)] \right),
  \ l = 1,\cdots,L
\end{displaymath}
where the maximum frequency is set as $\omega_{\max} = 4\omega_c$. Hence, the two-point correlation \eqref{def:B} is formulated as 
\begin{displaymath}
B(\tau_1, \tau_2)    = \sum_{l=1}^L \frac{c_l^2}{2\omega_l} \left[
 \coth \left( \frac{\beta \omega_l}{2} \right) \cos \big( \omega_l \Delta \tau \big)
 - \ii \sin\big( \omega_l \Delta \tau )
\right].
\end{displaymath} 

In Figure \ref{fig:bath cor}, we plot the amplitude of the two-point correlation with Kondo parameter $\xi=0.4$, inverse temperature $\beta=5$ and primary frequency $\omega_c = 2.5$ as the orange curve. The empirical constant $\mathcal{B}$ appearing in \eqref{P s* in T*} and \eqref{hat Mpk} should then be chosen between $(0,1.2)$. Larger $\mathcal{B}$ will lead to more time sequences sampled with large $m$, and thus a higher computational cost. In practice, one may start with a relatively small $\mathcal{B}$ to see whether the variance is small enough.
If not, one may then increase $\mathcal{B}$ and repeat the simulation. According to our tests, choosing $\mathcal{B}=0.2$ provides satisfactory results. We will also consider another numerical example with $\xi = 0.2$ for which the modulus of the two-point bath correlation is given by the blue curve in Figure \ref{fig:bath cor}. The corresponding $\mathcal{B}$ is set to be $0.1$. For all our numerical examples in this section, we truncate the series in \eqref{dyson int diff eq 2} and \eqref{eq: inchworm equation} by $\bar{M}=11$.

\begin{figure}[ht]
  \includegraphics[scale=0.5]{}
    \caption{Two-point correlation functions for different Kondo parameters (orange: $\xi = 0.4$, blue: $\xi = 0.2$).}
    \label{fig:bath cor}
\end{figure}

\subsection{Evolution of observable}
To validate our numerical method, we first apply our reuse of bath calculations to both coupling intensities and compare the evolution of the observables to the results of classical methods.
The observable of interest is set to be $O = \hat{\sigma}_z \otimes \mathrm{Id}_b$ which only acts on the system, and the initial density matrix $\rho = \rho_s \otimes \rho_b$ is given by 
\begin{displaymath}
   \rho_s = \ket{0} \bra{0} \quad  \text{~and~} \quad  \rho_b = Z^{-1} \exp(-\beta H_b)\,,
\end{displaymath}
where $Z$ is a normalizing factor satisfying $\tr(\rho_b) = 1$. The evolution of observable $\langle\hat{\sigma}_z(t)\rangle$ is then evaluated discretely by 
\begin{displaymath}
  \langle\hat{\sigma}_z(nh)\rangle \approx \bra{0} G_{-n,n} \ket{0} \text{~for~} n = 0,1,\cdots,N
\end{displaymath}
where $G_{-n,n}$ is computed by either scheme \eqref{dyson int diff eq scheme} for Dyson series or scheme \eqref{inchworm monte carlo} for inchworm Monte Carlo method. 

In our numerical tests, we set the time step to be $h = 0.05$. It is generally believed that the inchworm Monte Carlo method requires less samples than the summation of the Dyson series. Therefore we set the initial number of samples $\hat{\mathcal{M}}_{0}$ to be $10^5$ for the solver of the Dyson equation \eqref{dyson int diff eq 2}, and set  $\hat{\mathcal{M}}_{0} = 10^4$ for the inchworm Monte Carlo method.
In Figure \ref{fig:observable}, we plot the numerical results of observable for both Kondo parameters. The results by iterative QuAPI method \cite{Makri1995,Makri1998} are also given as the reference solutions. In the left panel, the two curves are hardly distinguishable and both match the reference solution well. In the right panel, however, an obvious difference between two curves can be observed after $t=2.5$ and the result of the iterative QuAPI method indicates that the inchworm Monte Carlo method gives a better approximation. This is due to the fact that the larger amplitude of $B(\tau_1,\tau_2)$ with $\xi=0.4$ makes the Dyson series harder to converge with respect to $m$ for long time simulations. As a result, the truncation $\bar{M}=11$ is no longer sufficient for Dyson series, but still works for the inchworm Monte Carlo method thanks to its faster convergence as mentioned in Section \ref{sec:inchworm int diff eq}.

\begin{figure}[ht]
  \includegraphics[scale=0.5]{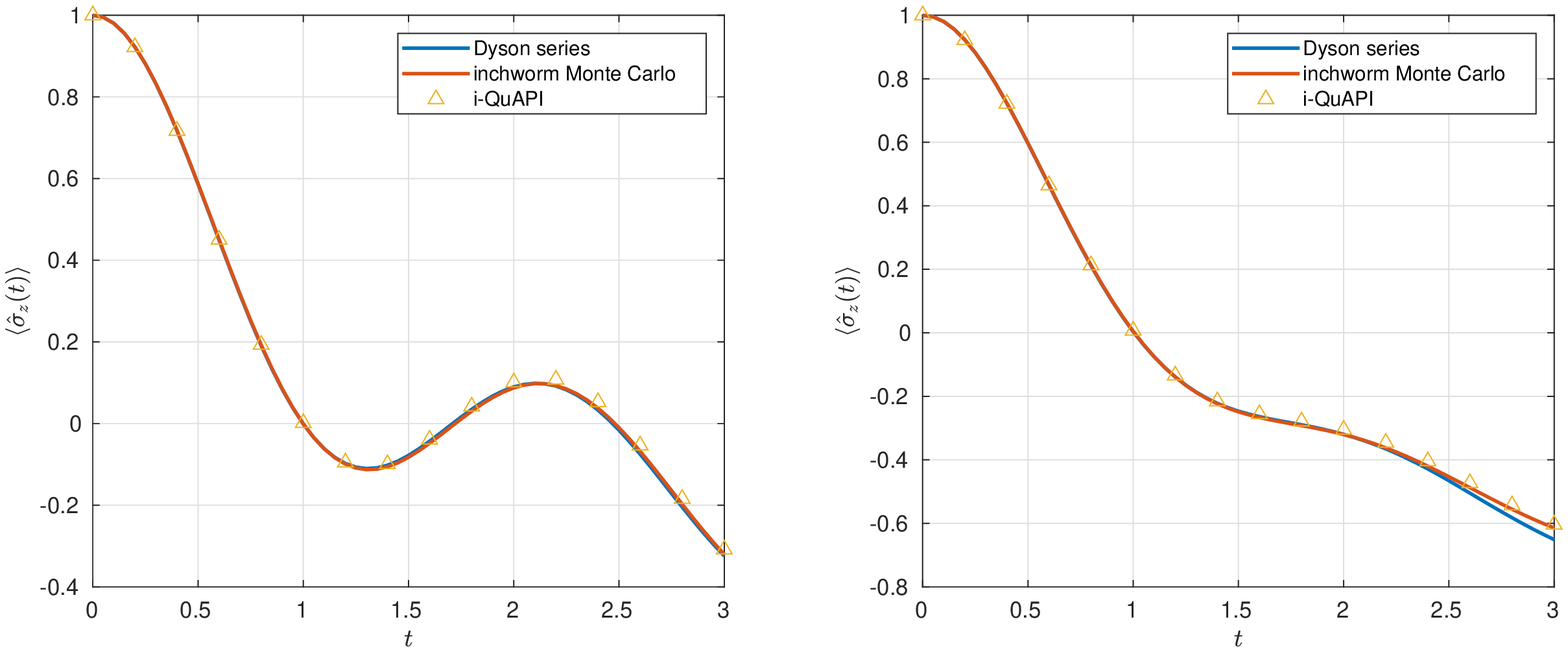}
    \caption{Evolution of $\langle \hat{\sigma}_z(t) \rangle$ under different settings of
the Kondo parameter (left: $\xi = 0.2$, right: $\xi = 0.4$).}
    \label{fig:observable}
\end{figure}

\subsection{Accuracy test}
\label{sec:accuracy}

To verify the accuracy of the numerical discretization by Heun's method used throughout this paper, we plot the results of $\langle \hat{\sigma}_z(t) \rangle$ computed by both algorithms with different time steps in Figure \ref{fig_accuracy}. The parameters of simulations are set to be the same as the left panel of Figure \ref{fig:observable}. For Dyson series, the result of $h = 0.05$ is indistinguishable with the result of $h = 0.025$ by naked eyes, while the curve for $h = 0.1$ still shows observable discrepancy with the other two lines. Note that $h = 0.05$ is used for the simulations in Figure \ref{fig:observable}, which is now proven to be reliable according to our accuracy test. For the inchworm Monte Carlo method, the convergence is achieved at a coarser grid $h = 0.1$, which is possibly due to the smaller number of terms in the bath influcence functional.
As a comparison, we also plot the results by first-order Forward Euler scheme (dashed curves), which obviously have not converged at $h=0.05$. This shows the advantage of Heun's method in terms of the accuracy of time discretization. While the second-order Heun's scheme is sufficient to produce accurate simulations up to $t=3$ in the current work, it is also worthwhile to consider higher-order or implicit schemes for the integral-differential equations \eqref{dyson int diff eq 2} and \eqref{eq: inchworm equation} to achieve better accuracy and stability.

\begin{figure}[ht]
  \includegraphics[scale=0.5]{}
    \caption{Evolution of $\langle \hat{\sigma}_z(t) \rangle$ for various time step lengths.}
    \label{fig_accuracy}
\end{figure}

\subsection{Efficiency test}
\label{sec:time complexity}
We now examine the computational time that can be saved by reusing the bath calculations. The experiments are carried out using MATLAB on AMD Ryzen 7 4800H CPU, and we use the parameters for the orange curve ($\xi = 0.4$) in Figure \ref{fig:bath cor} for the efficiency tests.   

We first compare the wall clock time on evaluating a given system associated $\mc{U}^{(0)}$ with that on $\Ls_b$ appearing in the integrand of Dyson series. As shown in Table \ref{tab:U vs L time}, the evaluation of $\Ls_b$ is more expensive than $\mc{U}^{(0)}$ in terms of time consumed for all choices of $m$. As $m$ increases, this difference becomes larger due to the linear complexity of $\mc{U}^{(0)}$ and exponential complexity of $\Ls_b$. Therefore, the computational cost on the bath influence functional dominates the overall evaluation of a given Dyson series. Similar conclusion for the inchworm Monte Carlo method can be drawn by Table \ref{tab:U vs L time inchworm}, where we list the wall clock time of $\mc{U}_I$ and $\Ls_b^c$ in the scheme \eqref{inchworm monte carlo}. Here both $\Ls_b$ and $\Ls_b^c$ are computed using the fast algorithms based on inclusion-exclusion principle as mentioned previously. Instead of directly summing the linked diagrams in \eqref{linked pairs example}, a given $\Ls_b^c(\sb,t)$ is evaluated indirectly under such algorithms which relies on the value of $\Ls_b(\sb,t)$ as well as $\Ls_b(\tilde{\sb},t)$ for some subsequences $\tilde{\sb} \subset \sb$, making $\Ls_b^c$ in general more expensive than $\Ls_b$ despite the fact that $\Ls_b^c$ contains fewer diagrams. We refer the readers to \cite{Yang2021} for more details of the algorithm. On the other hand, the computation of $\mc{U}_I$ defined by \eqref{UI} in inchworm method is faster than $\mc{U}^{(0)}$ defined by \eqref{def U0} in Dyson series since each matrix $G_I$ in $\mc{U}_I$ is obtained by linear interpolation, which is cheaper than $G^{(0)}_s$ in $\mc{U}^{(0)}$ where a matrix exponential is to be computed.

   \begin{table}[ht]
  \centering 
  \begin{tabular}{|c| c c c c c c|}
\hline
  $m$ & 1 & 3 & 5 & 7 & 9 & 11 \\
\hline
  $\mc{U}^{(0)}$  & 6.8000e-05  &    1.1800e-04 &    1.8000e-04 &   2.0800e-04 & 2.3200e-04 & 3.6500e-04\\
\hline
 $\Ls_b$  &  1.0100e-04  &    3.2800e-04 & 6.7200e-04 &   0.0011 & 0.0016 &  0.0023 \\
 \hline 
\end{tabular}
  \caption{Wall clock time (seconds) on evaluating a given $\mc{U}^{(0)}(-t,s_1,\cdots,s_m,t)$ and $\Ls_b(s_1,\cdots,s_m,t)$.}
   \label{tab:U vs L time}
\end{table}

\begin{table}[ht]
  \centering 
  \begin{tabular}{|c| c c c c c c|}
\hline
  $m$ & 1 & 3 & 5 & 7 & 9 & 11 \\
\hline
  $\mc{U}_I$  & 2.8000e-05  &    5.3000e-05 &    7.2000e-05 &   1.2600e-04 & 1.5900e-04 & 1.7000e-04\\
\hline
 $\Ls_b^c$  &  8.8000e-05  &    4.0200e-04 & 0.0010 &   0.0025 & 0.0053 &  0.0118 \\
 \hline 
\end{tabular}
  \caption{Wall clock time (seconds) on evaluating a given $\mc{U}_I(\si,s_1,\cdots,s_m,\sf)$ and $\Ls_b^c(s_1,\cdots,s_m,\sf)$.}
   \label{tab:U vs L time inchworm}
\end{table}

In Figure \ref{fig:time save}, we plot the theoretical savings in computational time spent on bath computations $R_{\mathrm{T}}$ defined by \eqref{RT} as the yellow solid lines, where the average wall clock time $\mathcal{T}^{(m)}$ for $\Ls_b$ in Dyson series and $\Ls_b^c$ in inchworm Monte Carlo method are respectively assigned with the values in Table \ref{tab:U vs L time} and \ref{tab:U vs L time inchworm}. The graphs of $R^{(1)}$ and $R^{(11)}$ are also plotted as the reference. As augured in Section \ref{sec:comp cost dqmc} and \ref{sec:comp cost inchworm}, $R_{\mathrm{T}}$ is always bounded by $R^{(1)}$ and $R^{(11)}$.

Meanwhile, we carry out two sets of numerical simulations under both methods with the initial number of samples $\hat{\mathcal{M}}_0 = 100$. In the first set of simulations, we apply the bath calculation reuse and record the total time spent on the bath influence functional up to $n$th time step as $\hat{\mathcal{T}}(n)$, while the second set are implemented without reusing calculations and the time consumed on bath is denoted by $\mathcal{T}(n)$. Then we may use the ratio 
\begin{displaymath}
  R_{\mathrm{T}}^{\text{real}}(n) = 1 - \hat{\mathcal{T}}(n)/\mathcal{T}(n)
\end{displaymath}
to measure the overall saving in time in real implementations, which are plotted as the purple solid lines in Figure \ref{fig:time save}. Since each evaluation on $\Ls_b$ or $\Ls_b^c$ cannot cost exactly the same amount of time, some oscillations can be observed in the purple curves. Nevertheless, $R_{\mathrm{T}}^{\text{real}}$ generally matches the theoretical $R_{\mathrm{T}}$ as $t$ grows, and thus we have verified the complexity analysis in Section \ref{sec:comp cost dqmc} and \ref{sec:comp cost inchworm}. As time further evolves, we may expect the overall saving in time to gradually converge to $R^{(11)}$ (orange dashed lines). Therefore, asymptotically the bath calculation reuse can achieve a total reduce in computational time at around the percentage $1- \frac{12}{n}$ for both Dyson series and inchworm Monte Carlo method for this example according to \eqref{dyson R asymptotic} and \eqref{inchworm R asymptotic}.

\begin{figure}[ht]
  \includegraphics[scale=0.5]{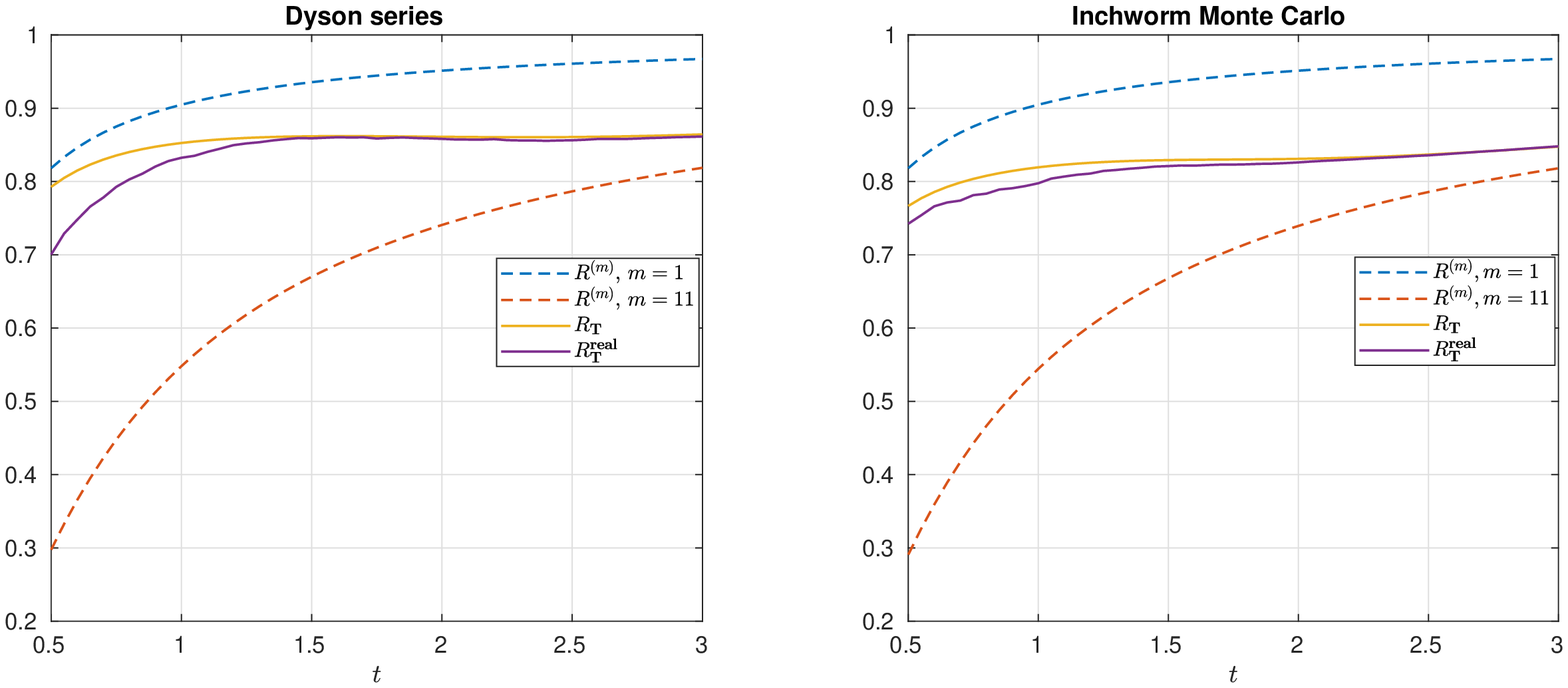}
  \caption{Overall savings of computational time spent on bath calculations.}
   \label{fig:time save}
\end{figure}

\subsection{Order of convergence}
As both numerical methods we have developed are stochastic schemes based on Monte Carlo, it is of interest to study the convergence rate of the standard derivation of the numerical solution with respect to the initial number of samples $\hat{\mathcal{M}}_0$. In this experiment, we fix the time step length as $h=0.1$ and compute up to $t=1$. The parameter setting for the two-point correlation is given as $\omega_c = 1$, $\xi=0.1$ and $\beta = 0.2$ with the empirical constant $\mathcal{B}=0.3$. We run the same simulation independently for $N_{\exp}=1000$ times, and the standard derivation of $G_{-n,n}$ is estimated as  
\begin{displaymath}
 \sigma_{\hat{\mathcal{M}}_0}(t_n) = \left( \frac{1}{N_{\exp}} \sum^{N_{\exp}}_{k=1} \left\| G^{[k]}_{-n,n} - \mu_{-n,n}  \right\|_{\mathrm{F}}^2  \right)^{1/2}, \text{~for~} n = 0,1,\cdots,20
\end{displaymath}
where $\|\cdot\|_{\mathrm{F}}$ denotes the Frobenius norm. Here $G^{[k]}_{-n,n}$ is the result of $k$th numerical simulation, and $\mu_{-n,n}$ should be the expectation of $G_{-n,n}$, which in our implementation is replaced by the numerical exact solution $G_{-n,n}$ that is computed based on a large initial number of samples $\hat{\mathcal{M}}_0 = 10^6$ for Dyson series and $\hat{\mathcal{M}}_0 = 10^5$ for inchworm Monte Carlo method. The numerical results are shown in Figure \ref{fig:conv order}, where the $1/2$ order of convergence for the standard derivation is obvious, indicating that the optimal convergence rate of Monte Carlo method is achieved in both stochastic schemes.

\begin{figure}[ht]
  \includegraphics[scale=0.5]{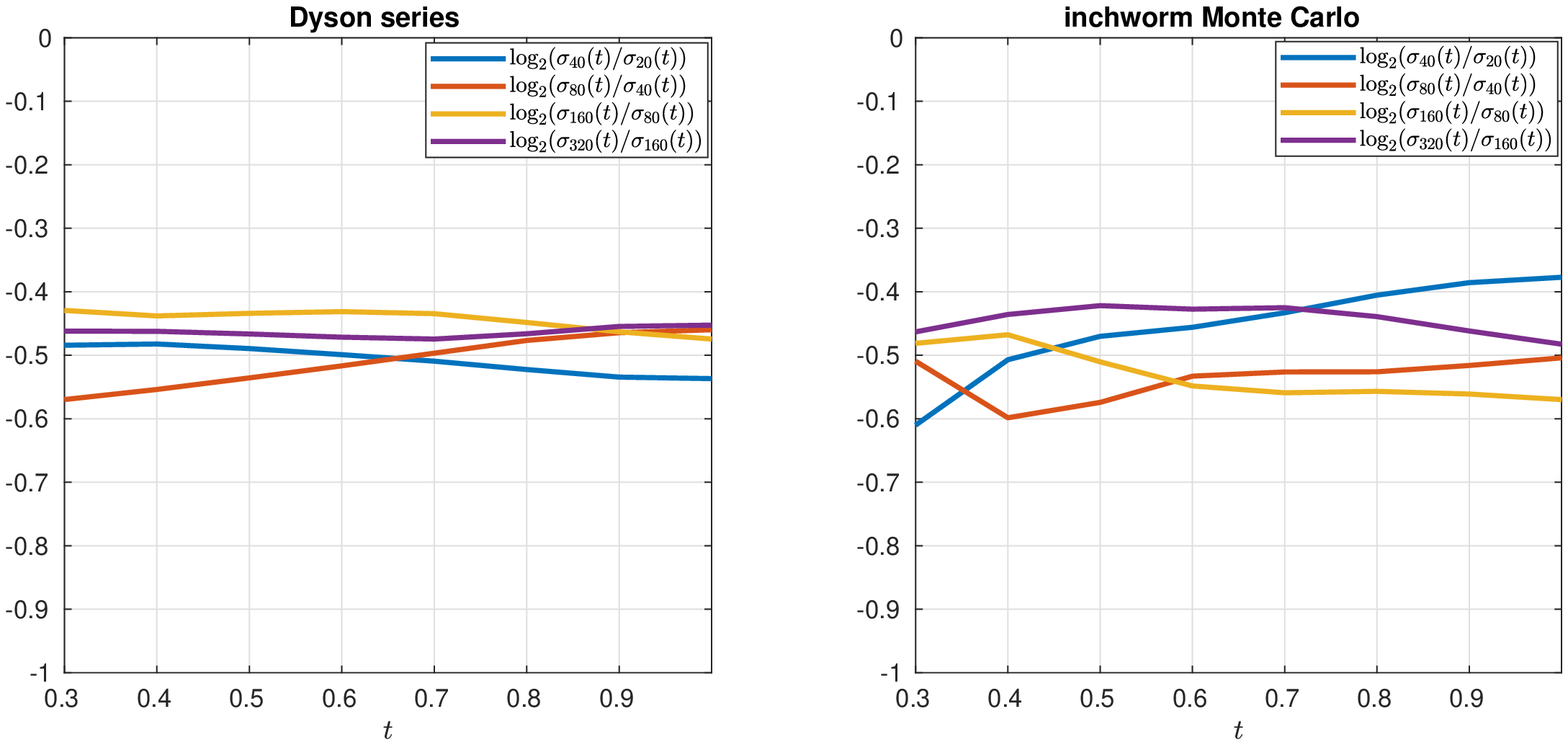}
    \caption{Evolution of the convergence rate of the standard derivation for the numerical solution.}
    \label{fig:conv order}
\end{figure}

\section{Conclusion}
\label{sec:conclusion}
We propose fast algorithms by reusing calculations of bath influence functionals to accelerate the summation of Dyson series and inchworm Monte Carlo method in the simulation of system-bath dynamics. For Dyson series, an integro-differential equation is derived, allowing us to solve the evolution of the observables using classic numerical schemes such as Runge-Kutta type methods. The idea of our fast algorithm is to make use of the invariance of the bath influence functionals so that any bath influence functional computed in the current time step can be reused in all the future time steps. Thanks to the linearity of the governing equation, the reuse algorithm for Dyson series can be implemented at a low memory cost. Such idea is then extended to the inchworm Monte Carlo method which computes the observables via the bivariate full propagator $G(\si,\sf)$, where the bath influence functionals calculated during the computation $G(\si,\sf)$ can be reused when computing $G(\si-\tau, \sf+\tau)$ for any $\tau > 0$. According to our complexity analysis, the computational cost is saved by a factor of $N$ with $N$ being the number of time steps, which makes our algorithms efficient for long time simulations. These theoretical results are further verified by numerical experiments.   

While we mainly focus on spin-boson model in this paper, our acceleration strategy can be also applied to general quantum systems interacting with the harmonic bath, where the value of two-point correlation $B(\tau_1,\tau_2)$ only relies on the time difference $\Delta \tau = |\tau_1|-|\tau_2|$ as in \eqref{def:B}. We also point out that for certain parameter settings of $B(\tau_1,\tau_2)$, a very small truncation at $\bar{M}=1$ or $\bar{M}=3$ may be sufficient for Dyson series and inchworm Monte Carlo method (see such examples in \cite[Section 7]{Cai2020}). In these cases, computational cost on the system integrand factor is comparable to that on the bath as shown in Table \ref{tab:U vs L time} and \ref{tab:U vs L time inchworm}. Therefore, including the system associated functional in the calculation reuse will be an interesting future direction. In addition, as the storage of bath influence functionals is the Achilles' heel of inchworm method in the current framework, further explorations into the memory cost reduction are also worth considering in future works.

\appendix

\section{Proof of statements}
\label{app:proof}
\subsection{Proof of \eqref{prop: B and Gs}}

\begin{proof}
\begin{itemize}
\item If $\si \le \sf <0$, we have 
\begin{displaymath}
 \begin{split}
&G_s^{(0)}(\si,\sf)^\dagger = \left(  \ee^{-\ii (\sf - \si) H_s} \right)^\dagger =  \ee^{-\ii (\si - \sf) H_s},\\
&\overline{B(\si,\sf)} = \overline{B^*(\sf - \si)} = B^*(\si -\sf)
 \end{split}
\end{displaymath} 
Since $0< -\sf \le -\si$ in this case, 
\begin{displaymath}
 \begin{split}
&G_s^{(0)}(-\sf,-\si) = \ee^{-\ii (\si - \sf) H_s} = G_s^{(0)}(\si,\sf)^\dagger, \\
& B(-\sf,-\si) = B^*(\si-\sf) = \overline{B(\si,\sf)}.
 \end{split}
\end{displaymath} 

\item If $0 < \si \le \sf$, we have $-\sf \le -\si < 0$ and thus 
\begin{displaymath}
\begin{split}
&G_s^{(0)}(-\sf,-\si) = \ee^{-\ii (\sf - \si) H_s}  = \left( \ee^{-\ii (\si - \sf) H_s} \right)^\dagger =  G_s^{(0)}(\si,\sf)^\dagger ,\\
& B(-\sf,-\si) = B^*(\sf - \si) = \overline{B^*(\si - \sf)} =  \overline{B(\si,\sf)}.
 \end{split}
\end{displaymath}

\item If $\si < 0 < \sf$, we have $-\sf < 0 < -\si$ and thus 
\begin{displaymath}
 \begin{split}
&G_s^{(0)}(-\sf,-\si) = \ee^{-\ii  \si H_s} O_s \ee^{-\ii  \sf H_s} = \left( \ee^{\ii  \sf H_s} O_s \ee^{\ii  \si H_s}  \right)^\dagger = G_s^{(0)}(\si,\sf)^\dagger,\\
&  B(-\sf,-\si) =  B^*( \si+\sf) =  \overline{B^*(-(\si + \sf))} = \overline{B(\si,\sf)}.
 \end{split}
\end{displaymath}
\end{itemize}

The above analysis excludes the special cases  $0 = \si \le \sf$ and $\si< \sf =0$ for which the statement for $B(\cdot,\cdot)$ is still true, while for $G_s^{(0)}(\cdot,\cdot)$ in general it is not due to the presence of $O_s$.

\begin{itemize}

\item  If $0 =\si < \sf$, we have $-\sf <0$ and   
\begin{displaymath}
 \begin{split}
&G_s^{(0)}(-\sf,-\si) = G_s^{(0)}(-\sf,0) = O_s \ee^{-\ii \sf H_s} = O_s G_s^{(0)}(\si,\sf)^\dagger,\\
& B(-\sf,-\si) = B(-\sf,0) = B^*(\sf)  = \overline{B^*(-\sf)} = \overline{B(\si,\sf)}.
 \end{split}
\end{displaymath}

\item  If $\si<\sf=0$, we have $-\si>0$ and 
\begin{displaymath}
 \begin{split}
&G_s^{(0)}(-\sf,-\si)O_s =  G_s^{(0)}(0,-\si)O_s = \ee^{-\ii \si H_s}O_s = G_s^{(0)}(\si,\sf)^\dagger,\\
& B(-\sf,-\si) =  B(0,-\si) = B^*(\si) = \overline{B^*(-\si)} =\overline{B(\si,\sf)}.
 \end{split}
\end{displaymath}

\item If $\si = \sf = 0$, 
\begin{displaymath}
 \begin{split}
&G_s^{(0)}(-\sf,-\si) = G_s^{(0)}(0,0) = I = G_s^{(0)}(\si,\sf)^\dagger,\\
&  B(-\sf,-\si) = B(0,0) = \frac{1}{\pi} \int^{\infty}_0 J(\omega)\dd \omega = \overline{B(\si,\sf)}.
 \end{split}
\end{displaymath}
\end{itemize}
\end{proof}

\subsection{Proof of \eqref{lemma:Lb} in Lemma \ref{lemma:dqmc}}
\begin{proof}
Define $s_{m+1} =t$ and $s'_{0} = -t$, we have   
  \begin{displaymath}
 \begin{split}
\Ls_b(-t,s_1,\cdots,s_m) = & \ \sum_{\mf{q}' \in \mQ(-t,\sb')} \prod_{(s'_j,s'_k) \in \mf{q}'} B(s'_j,s'_k) \\
= & \  \sum_{\mf{q}' \in \mQ(-t,\sb')} \prod_{(s'_j,s'_k) \in \mf{q}'} B(-s_{m+1-j},-s_{m+1-k}) \\
=& \ \sum_{\mf{q}' \in \mQ(-t,\sb')} \prod_{(s'_j,s'_k) \in \mf{q}'} \overline{B(s_{m+1-k},s_{m+1-j})} \\
\text{replace~}j'=m+1-k,k'=m+1-j  \Rightarrow \ =& \ \sum_{\mf{q}' \in \mQ(-t,-\sb)} \prod_{(s'_{m+1-k'},s'_{m+1-j'}) \in \mf{q}'} \overline{B(s_{j'},s_{k'})} \\
= & \ \sum_{\mf{q}' \in \mQ(-t,-\sb)} \prod_{(-s_{k'},-s_{j'}) \in \mf{q}'} \overline{B(s_{j'},s_{k'})} \\
= & \ \sum_{\mf{q} \in \mQ(\sb,t)} \prod_{(s_{j'},s_{k'}) \in \mf{q}} \overline{B(s_{j'},s_{k'})} = \overline{ \Ls_b(s_1,\cdots,s_m,t) }.
 \end{split}
\end{displaymath}

\end{proof}

\bibliographystyle{plain}
\bibliography{Inchworm}

\end{document}